\begin{document}

\title{A Two-Phase Scheme for Distributed TDMA Scheduling in WSNs with Flexibility to Trade-off  between Schedule Length and Scheduling Time}

\titlerunning{Two-Phase TDMA-Scheduling}        

\author{Ashutosh Bhatia       \and
        R. C. Hansdah 
}


\institute{Ashutosh Bhatiar \at
               Dept. of CSIS, BITS Pilani \\
               \email{ashutosh.bhatia@pilani.bits-pilani-ac.in} 
               \and
               R. C. Hansdah \at
              Dept. of CSA, IISc Bangalore \\
	   \email{hansdah@csa.iisc.ernet.in}
}

\date{}

\maketitle

\begin{abstract}
The existing distributed TDMA-scheduling techniques can be classified as either static or 
dynamic. The primary purpose of static  TDMA-scheduling algorithms is to improve the
channel utilization by generating a schedule of shorter  length. But, they usually take longer time 
 to schedule, and hence, are not suitable for WSNs, in which the network topology changes 
dynamically. On the other hand, dynamic TDMA-scheduling algorithms generate a schedule quickly, 
but they are not efficient in terms of generated schedule length. In this paper, we propose a new approach to TDMA scheduling for WSNs, that bridges the gap between the above two extreme types of 
TDMA-scheduling techniques, by  providing the flexibility to trade-off between the schedule length and the 
time required to generate the schedule (scheduling time). The proposed TDMA scheduling works in two phases. In the first phase, 
we generate a TDMA schedule quickly, 
which need not have to be  very efficient in terms of schedule length. In the second phase, we
iteratively reduce the schedule length in a manner, such that the process of schedule length 
reduction can be terminated after the execution  of an arbitrary number of iterations,
and still be left with a valid schedule. This step provides the capability to trade-off between schedule length and  scheduling time.
We have used  Castalia network simulator to evaluate the performance of proposed TDMA-scheduling scheme.
The simulation result together with theoretical analysis shows that, 
        in addition to the advantage of trading-off  the  schedule length with scheduling time, 
        the proposed TDMA scheduling approach achieves better performance than existing 
        algorithms in terms of  schedule length and scheduling time. 
\keywords{Wireless Sensor Networks \and  Media Access Control \and TDMA slot scheduling}
\end{abstract}

\section{Introduction}
The collision of frames severely degrades the performance of WSNs in terms of delay, channel utilization and power saving 
requirement. 
Time Division Media Access (TDMA) is a well known technique
to provide collision-free and energy-efficient transmission,
especially for applications with predictable communication
patterns. 
Furthermore, TDMA-based communication provides guaranteed QoS in terms 
of delay on the completion time of data collection, for instance, in the  timely 
detection of events in WSNs. 

In TDMA-based channel access,  time is divided into slots, and the slots are 
further organized into frames.
The slot(s) at which a node can transmit 
is usually determined by a predefined TDMA schedule.
The problem of finding a TDMA schedule with  minimum
schedule length (optimal TDMA schedule) is NP-Complete \cite{TDMA-WSN}. Plenty
of research work has been carried out to provide efficient
algorithms to perform TDMA scheduling. These algorithms
can be classified either as centralized or decentralized (distributed). The centralized approach normally needs complete
topology information at a single node in the WSN to perform
scheduling, and therefore it is not feasible for large-scale
multi-hop WSNs.

The existing distributed TDMA-scheduling techniques can be classified as either static or 
dynamic. The primary purpose of static  TDMA-scheduling algorithms is to improve the
channel utilization and possibly delay,  by generating a schedule of shorter length. 
Usually, static TDMA-scheduling algorithms take a very long time to 
generate such a schedule, and therefore, 
they are suitable for the situations in which the same schedule can be used for a 
sufficiently longer duration of time. 
But, sometimes, the same schedule cannot be reused for multiple
future data sessions, because  the network topology
may get changed with the progress of time, due to dynamic channel conditions or periodic  sleep scheduling
of sensor nodes to conserve their energy.
Additionally, even if the network topology has not changed, using the 
same schedule would not be efficient,  when the underlying application or the next-hop information used for forwarding 
(routing) the data changes. In such cases,  re-scheduling has to be performed after a certain period of time, and 
therefore, taking very long time to generate a compact schedule may lead to the consumption of more energy and result in increased 
delay, instead of improving the same.

As opposed  to static TDMA-scheduling algorithms, the algorithms which belong to dynamic category 
try to generate a TDMA schedule quickly, which may 
not be very efficient in terms of schedule length. Although they perform poorly in the terms of 
bandwidth utilization, they are suitable for the
cases where re-scheduling needs to be performed frequently due to the reasons discussed above.

The discussion given above  suggests that, an efficient TDMA-scheduling algorithm should try to minimize 
schedule length and scheduling time simultaneously. Unfortunately, both these objectives are 
mutually conflicting. Therefore, the trade-off between these two conflicting objectives needs to be 
addressed to improve the bandwidth utilization, energy saving  and QoS performance of a 
TDMA-scheduling algorithm. Although there exists some works that address the  joint objectives 
such as trade-off between energy efficiency and latency, but none of them have tried before the 
trade-off between schedule length and scheduling time.

The existing static algorithms for TDMA scheduling  are not designed in a manner so that
their execution can be stopped in-between to restrict the scheduling time, and still get a valid TDMA schedule. 
We call such algorithms as  single phase TDMA-scheduling algorithms in the sense that they can 
produce a valid TDMA schedule only at the end of the execution.
On the other hand, the existing dynamic TDMA-scheduling algorithms are not designed in a manner such that, 
multiple back-to-back execution of the same algorithm can be used to improve its performance in terms of 
schedule length of generated schedule. In summary, both types of TDMA-scheduling algorithms (static and dynamic), available in literature, 
do not have the flexibility to trade-off schedule length with scheduling time.

In this paper, we present a novel two-phase scheme for distributed TDMA scheduling in WSNs, 
that  bridges the gap between these two extreme (static and dynamic) types of TDMA-scheduling algorithms  by providing the flexibility to trade-off  schedule length with  scheduling-time, and also,
provides a better performance in terms of  schedule length and scheduling time than the existing algorithms.

 The rest of the paper is organized as follows. Section 2 discusses the related work. In section 3, we give the overview of 
 proposed two-phase scheme for TDMA Scheduling. In section 4, we present a distributed 
 and randomized TDMA-scheduling (RD-TDMA) algorithm as part of  phase 1 of the proposed two-phase TDMA-scheduling scheme. 
Section 5 presents a distributed schedule length  reduction (DSLR) algorithm for phase 2 of the proposed scheme.  The proof of correctness of RD-TDMA and DSLR algorithms are given in section 6. The 
 theoretical analysis, for the time taken by the RD-TDMA and DSLR algorithm, is provided in section 7. Results of simulation
studies of proposed two-phase scheme for distributed TDMA scheduling in WSNs and  its performance comparison with existing  works are discussed in section 8. 
Section 9 concludes the paper with suggestions for future work.

\section{Related Work} 
Previous works  \cite{TDMA-WSN,Centralized-1,Centralized-3,Centralized-4} on TDMA slot 
scheduling  primarily focus on decreasing 
the length of schedules. They are centralized in nature, and therefore, are not scalable. 
In \cite{TDMA-WSN},  specific scheduling problem for wireless sensor network, viz. 
converge-cast transmission,  is considered, where the scheduling
problem is to find a minimum length frame during which all nodes can send their packets to the
access point (AP), and the problem  is shown to be NP-complete.
The cluster based TDMA protocols (e.g., the protocols in \cite{LEACH,cluster}), prove to be 
having good scalability. 
The common feature of these
protocols is to partition the network into some number of clusters, in which each  cluster head is responsible 
for scheduling its
members. 
However, they suffer from inter cluster transmission interference because clusters 
created by distributed clustering 
algorithms are often overlapped, and several cluster heads may cover the same nodes. 
The protocol in \cite{contention-free-MAC} proposes a contention-free MAC for 
\textit{correlated-contention} which does not 
assume global time reference.
The approach  is based on local framing, where each node selects 
a slot in its own frame such that  slot  of any 2-hop-neighbor nodes must not overlap the selected slot. 
The protocol assumes that 
a node can 
detect a collision if two or more nodes within its transmission range  attempt to transmit at the same time. This 
approach has its own practical
limitations with wireless transceivers.	
A randomized CSMA protocol, called ``Sift'' \cite{sift}, tries to reduce the latency for delivering 
event reports 
instead of completely 
avoiding the collision. 
Sift uses a small and fixed contention window of size 32 slots and geometrically increases
non-uniform 
probability distribution for picking a transmission slot in the contention window. 
The key difference with the 
traditional MAC protocol, 
for example 802.11 \cite{ref-80211}, is that the probability distribution for selecting a contention slot is not 
uniform.
Moscibroda et al. \cite{Distance-1} have proposed a distributed graph coloring scheme
with a time complexity of $O(\rho \log n)$, where $\rho$ is the maximum node degree and n is the 
number of nodes in the 
network. The scheme performs
distance-1 coloring such that the adjacent nodes have different
colors. Note that,  this scheme does not prevent nodes within two
hops of each other from being assigned the same color potentially causing hidden terminal 
collisions between such nodes.
The NAMA \cite{NAMA} protocol uses a distributed scheduling scheme based on  hash function to 
determine the
priority among 
contending neighbors.  A major limitation of this hashing
based technique is that even though a node gets a higher priority in one neighborhood, it may still 
have
a lower priority in other neighborhoods. Thus, the maximum  number of slots could be of order $O(n)$, 
where $n$ is the number 
of nodes in the network. Secondly, since each node 
calculates the 
priority of all its two-hop neighbors 
for every slot, it leads to $O(\delta^2)$ computational complexity, where $\delta$ is the maximum 
size of two-hop neighborhood,
and hence, the scheme is not scalable for large network with resource constraint nodes. 
Herman et al. have  proposed a distributed TDMA slot assignment algorithm in \cite{distance-2} based
on  distance-2 coloring scheme.  In this algorithm, each node
maintains the  state information within its three-hop neighborhood, which
could be quite difficult and resource intensive. 

The  distributed TDMA slot scheduling algorithm, called DRAND  \cite{DRAND}, uses a 
distributed and randomized time
slot scheduling scheme which is used within a
MAC protocol, called Zebra-MAC \cite{Z-MAC}, to improve performance in sensor networks by combining 
the strength of
scheduled access during high loads and random access
during low loads. The Runtime complexity of DRAND is of the order of $O(\delta^2)$ due to unbounded 
message delays .  
The protocol in \cite{CCH} presents a distributed  slot assignment algorithm, which uses a 
heuristic 
approach, called 
Color Constraint Heuristic (CCH), for choosing the order in which  to color the nodes in a graph. This is 
unlike 
the DRAND algorithm which does not impose any ordering on the nodes to color them. Although the 
CCH scheme presented in \cite{CCH}
takes lesser number of slots as compared to DRAND,  the time taken by this scheme  to schedule 
all the nodes in the network is 
larger than that of DRAND.

The Five Phase Reservation Protocol (FPRP) in \cite{FPRP} is a distributed heuristic 
TDMA slot assignment algorithm. FPRP protocol is designed for dynamic slot assignment, in which 
the real time is divided into a 
series of a pair
of reservation and data transmission phases. For each time slot of the data transmission
phase, FPRP runs a five-phase protocol for a number cycles to pick a winner for each 
slot.

In another distributed slot scheduling algorithm, called DD-TDMA \cite{DTDMA}, a node $i$ decides 
slot $j$ as its 
own slot if all the nodes with id less than the
id of node $i$ have already decided 
their slot, where $j$ is the minimum available slot. The scheduled node
broadcasts its slot assignment to one-hop neighbors. Then these one-hop neighbors broadcast this 
information to update
two-hop neighbors. The algorithm proposed in \cite{DTSS}, called DTSS, provides a unified slot 
scheduling scheme for unicast, 
multicast and 
broadcast modes of transmission.
The DTSS algorithm assumes that all the nodes 
are synchronized with respect to a global time reference, before running the scheduling algorithm,
and also each node knows its set of 
intended receivers. 
Finally, a classification of different slot scheduling algorithms based on problem setting, problem 
goal, 
type of inputs and solution techniques, 
can be found in \cite{survey_TDMA}.

\section{Proposed Two-Phase TDMA Scheduling}
\noindent \textbf{Phase 1} \\
  Generate a TDMA schedule quickly using contention-based channel access mechanism. The 
 generated schedule in this phase need not have to be very efficient in terms of schedule length. For this phase, we propose
 a randomized and distributed TDMA scheduling algorithm (RD-TDMA) based on graph colouring approach. 
 A major advantage of RD-TDMA algorithm over existing TDMA-scheduling algorithms, is the multifold reduction in scheduling-time. This 
 is because, the static TDMA-scheduling algorithms typically use heuristic based approach (greedy approach) for graph colouring 
 which is inherently sequential in nature. On the other hand,  in the RD-TDMA algorithm, all the nodes can concurrently select their slots  using probabilistic approach.
%
%
%
\vspace{5pt}

\noindent \textbf{Phase 2} \\
Iteratively reduce the schedule length of the schedule generated in phase 1, using 
 TDMA-based channel  access. Note that, in this phase, we can use TDMA-based channel access  using  TDMA schedule 
 generated in phase 1. The process of schedule-length reduction is designed 
 in a manner such that it can be terminated after  the execution of arbitrary number of iterations, and still be left 
 with a valid schedule. This phase provides the flexibility to trade-off between schedule length and scheduling time. 
 For this phase, we have proposed a distributed schedule length
 reduction (DSLR) algorithm  which is deterministic in nature as opposed to the probabilistic nature of  RD-TDMA  algorithm proposed for  phase 1. 
 The basic idea behind the DSLR algorithm is fairly straightforward. In order to reduce the schedule length, all the nodes in the network move to  another slot with Id less than the Id of the slot currently occupied by them, without violating the conflict-free property 
 of the input schedule. The real challenge lies in the implementation of DSLR algorithm in parallel and distributed manner, such that
 no two  nodes which are   in  two-hop neighborhood of each other  simultaneously move to the same slot.
 
 \section{Phase 1: Distributed TDMA Scheduling}


The TDMA slot scheduling problem can be formally defined as the problem of assignment of a time 
slot to each node, such that if any two nodes are in conflict, they do not take the same time slot. Such an assignment is 
called a feasible TDMA schedule. Two nodes are said to be in \textit{conflict} 
if and only if  the transmission from one node causes an interference at any of the receiver of the 
other node. For example, in case of broadcast scheduling, a node cannot take a slot, if it is taken by any of 
its two-hop neighbors. A detailed list of conflict relations in wireless networks can be found in \cite{unified}.
In this paper, we assume the broadcast mode of communication to 
describe the proposed scheme, 
but it can be easily extended to unicast and multicast transmission modes.
Timeline is divided into fixed size frames and each frame is further divided into fixed number of 
time slots, $S$, called schedule length. 
Time slots within a frame are numbered from 1 to $S$, assuming that $S$ is 
sufficiently large enough to handle assignment for all possible input graphs. 
%
%
%

Now, we discuss the basic idea of the  proposed RD-TDMA algorithm followed by its description in detail.
Subsequently, we propose an optimization of RD-TDMA algorithm to achieve faster convergence, by dynamically updating the slot probabilities,
with which the nodes try to select different slots. 
The set of data structures and definitions that we have used to describe the RD-TDMA algorithm are collectively given 
in Table 1. \\

\noindent A. \textit{Overview of RD-TDMA Algorithm} \\
The basic idea of the proposed algorithm is as follows. For each slot $s$ in a frame, each node $i$ 
checks 
whether 
it can take the slot $s$, 
by broadcasting a  request  message  with slot probability $\boldmath p_{i,s}$. The value of $\boldmath p_{i,s}$ depends upon 
the remaining number of free slots in the frame, 
currently known at node $i$. When a node $j$ receives a request message from node $i$ for slot $s$, 
it 
grants the same 
to node $i$ if
it is not trying for the same slot or it  has not already granted the slot to some other node. If node $i$ receives grant 
from all its one-hop neighbors in $N_i$, it assigns the time slot $s$ to itself; otherwise, it leaves the slot,
as soon as it receives a reject message  from one of the nodes in $N_i$ and repeats the above process all over again. 
Once a slot is assigned to a node $i$, it informs the same to its neighbors by periodically broadcasting an indicate
message.
This would enable the neighboring nodes of node $i$ to leave the slot and try other slots with higher probability. Furthermore, the nodes 
in 
$N_i$ also 
propagate this information to their neighbors through their own transmissions so that two-hop neighbors of
$i$, $N2_i = (\cup_{j \in N_i} N_j) \cup N_i $ - \{i\}, are also informed. \\

%

 \begin{figure}[t]
\centering
\includegraphics[scale=.30]{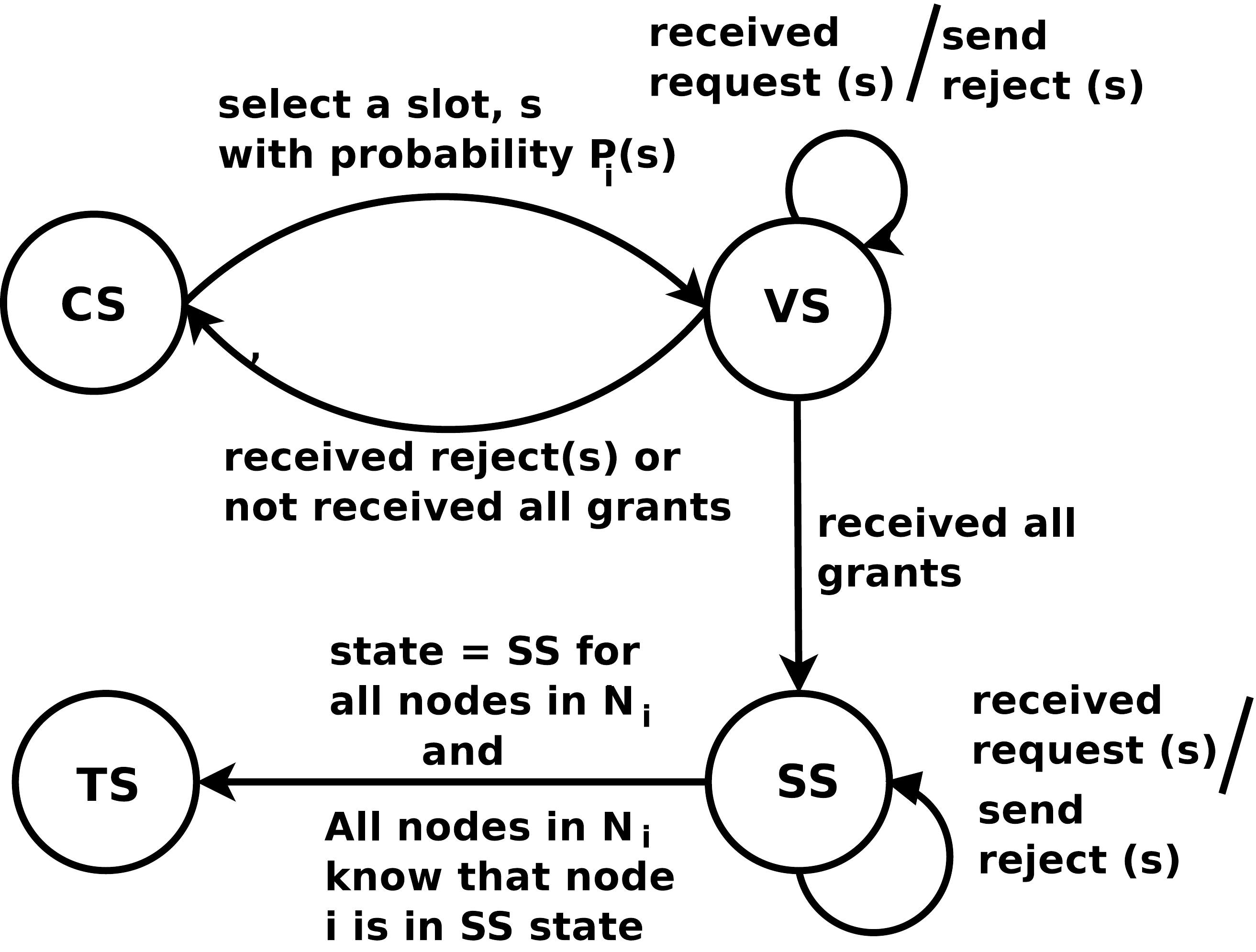}
\caption{The state transition diagram of a node $i$. 
}
\label{fig:FSM}
\end{figure}


\noindent B. \textit{Detailed Description of RD-TDMA}

Initially, each node $i$ sets $\boldmath{p_{i,s}} = \frac{1}{S}, where ~ 1 \le s \le S$. 
The value of $\boldmath{p_{i,s}}$ for each slot $s$ at a node keeps changing  during the execution of 
the algorithm, and it depends upon the corresponding slot-probabilities of the other nodes in the set $N2_i$. 
The summation of slot-probabilities at a node $i$, is always 1, i.e., $\sum_{s=1}^{S} p_{i,s} = 1$.

Each node contending for a time slot, passes through several states. 
Figure \ref{fig:FSM} shows the state transition diagram for a node $i$. 
Initially, a node \textbf{A} enters the \textbf{contention-state} (\textbf{CS}), where it randomly selects a slot 
$s$ as per the probability distribution defined by vector $\boldmath{P_A}$.
%
%
%
After selecting  slot $s$, node \textbf{A} 
enters the \textbf{verification-state} (\textbf{VS}). In \textbf{VS} state,  it  waits for $s$ time units (actual time depends upon the 
underlying data rate), and broadcasts a request (REQ) message. 
A random delay before transmitting the REQ message in  \textbf{VS} state,
avoids the collision between REQ messages, simultaneously  transmitted by various nodes in a proximity. 
Remaining  collisions are assumed to be handled by the underlying MAC layer. 
When a node \textbf{B} $\in N_A$  receives a REQ message for slot $s$ from node \textbf{A}, it grants slot 
$s$ to \textbf{A}, and informs the same by
piggybacking the grant information in its own subsequent messages
using a vector,  called \textbf{grant-vector} (\textbf{GV}),  only if any of the following 
conditions are met. The vector \textbf{GV} in the REQ message sent by node $i$ is  the same as the local vector \textbf{gV} at node 
$i$.

\noindent 1) Node \textbf{B} has not  granted slot $s$ to any other node. \\
2) Node \textbf{B} is not in \textbf{V}S state with respect to slot $s$ i.e. it has not sent a request for slot $s$ and waiting for grants. \\
3) Node \textbf{B} had granted slot $s$ to node \textbf{C}, but subsequently it received request for 
another slot $u$
 from node \textbf{C}.
\begin{table*}[t]
  \centering
   \begin{tabular}{|c|p{9cm}|}
    \hline
    \textbf{Notation} & \textbf{Description} \\
    \hline	
  $\mathbf{N_{i}}$ &  Set of one-hop neighbors of a node i  \\ 
  $\mathbf{N2_i}$ & Set of two-hop neighbors of node i  $(\cup_{j \in N_i} N_j) \cup N_i   - \{i\} $ \ \\
\textbf{gV} & A vector of size $1 \times S$, where \textbf{gV(s)} at node $i$ contains the node id which has been granted slot $s$ 
by the 
node $i$\\
\textbf{oV} & A vector of size $1 \times S$, where \textbf{oV(s)} at node $i$ indicates that the slot $s$ is occupied if 
\textbf{oV(s)} 
$=1$\\
\textbf{rgV} & A vector of size $1 \times |N_i|$, where \textbf{rgV(k)}  $=1$ at node $i$  indicates that node $i$ has received grant from 
a node j 
such that $order(j) = k$. Here, $``order" $ is a function which maps the ID of  all the nodes in  $N_i$ to a unique number between $1$ and 
$|N_i|$. \\  
\textbf{indV} & A vector of size $1 \times |N_i|$, where \textbf{indV(k)}  $=1$ at node $i$ denotes that node $i$  is aware that ``a node j 
such that
$order(j) = k$, 
knows that node $i$ has occupied a slot''.  \\
 $\mathbf{P_i}$ & A slot-probability vector of size $1 \times S$,  where $P_i(s) = \boldmath{p_{i,s}}$ \\ 
 $\mathbf{slot_i}$ &  Id of the slot taken by node $i$ \\
       \hline
 \end{tabular}
 \caption{The set of data structures and definitions used in the implementation of RD-TDMA algorithm}
  \label{table:complexity}
\end{table*}

 After granting  the slot $s$ to node \textbf{A}, node 
\textbf{B} leaves the slot $s$ temporarily by setting 
$P_B(s) = 0$, until it 
 receives another REQ message from node \textbf{A}  
 for a slot $u$ other than slot $s$.

If node \textbf{B} receives a REQ message from node \textbf{A} for a slot $u$ other than slot $s$, 
which it 
has already granted to node \textbf{A}, it revokes the grant of 
 slot $s$ to A and assigns a new value to $P_A(s)$. Moreover, it can  grant  slot $u$ to node
\textbf{A} as per the three conditions 
 given above.  In case node \textbf{B} receives a REQ message from \textbf{A} for an already granted slot to 
\textbf{A}, it will
 simply ignore it.
  If REQ message transmitted by node \textbf{A} in \textbf{VS} state is not received at one or more nodes in  $N_A$,
  either due to channel error or collision, node \textbf{A} will  eventually not receive any grant or reject 
  from those nodes. In this case, it will retransmit the  REQ message  for slot $s$, after waiting for 
a time uniformly distributed between 0 and $S$. 
 While in \textbf{VS} state,  if  node 
\textbf{A} receives a REJECT message  or does not receive the grant from each of the nodes in $N_A$ within a $MAX\_ATTEMPTS$ number of 
transmission of REQ message,  it goes back to \textbf{CS} state.

After receiving grant from every node in $N_A$ for slot $s$, node \textbf{A}
 enters the \textbf{scheduled-state} (\textbf{SS}), and it sets the 
 slot-probability $P_A(s) = 1$, $slot_i = s$ and $P_A(u) = 0, \forall u \ne s $.
 After entering \textbf{SS} state,
 node \textbf{A} broadcasts an indicate (IND) message to the nodes in $N_A$, informing them that it has taken the slot.  
When node \textbf{B} receives an IND message from node \textbf{A} with respect to slot $s$, it 
will leave the slot $s$ permanently by setting $P_B(s) = 0$.
 Furthermore, it will convey the same information to the nodes in
 $N_B$ through its own subsequent REQ messages.
 This is achieved  by adding a  bit vector field, called \textbf{occupied-vector} (\textbf{OV}),  in the REQ messages, 
specifying that a particular slot is already 
 taken by a node in $N_B$ (in this case, node \textbf{A}), if the corresponding bit is set. 
Furthermore, when node  \textbf{A}  receives an \textbf{OV} vector  from node \textbf{B} s.t.  \textbf{OV}$(slot_A) = 1$, it sets $indV(order(B)) = 1$.  This tells node  \textbf{A}  that  node  \textbf{B} is aware about the fact that node   \textbf{A}  has occupied  a slot.
The idea of adding   an \textbf{occupied-vector} (\textbf{OV}) field   in the REQ message of node \textbf{B}  instead of   simply relaying    IND message  that  node \textbf{B} received from node  \textbf{A},   mitigates the problem of 
\textbf{broadcast-storm} considerably. 
 The vector \textbf{OV} in the REQ message sent by any node $i$ is  the same as the local vector \textbf{oV} at node 
$i$. When node \textbf{C}  
receives 
a REQ message from node \textbf{B}, it will also set the slot-probability to 0 for those slots where the corresponding bits in vector 
\textbf{OV} are set.

Finally, from \textbf{SS} state, node \textbf{A} would enter the \textbf{terminate-state} (\textbf{TS}), if every node in  $N_A$ is in
\textbf{SS} state and all of them would know that the node \textbf{A} is also in \textbf{SS} state.  This situation can simply be tested by checking whether all the bits in  the vector \textbf{indV} are set. 
In \textbf{TS} state, the execution of RD-TDMA algorithm would stop, and  the node would not  transmit any further messages. \\


\noindent C. \textit{Dynamic Slot Probabilities} \\
In order to achieve faster convergence, it is desirable that node $i$ should update its slot probabilities, instead of 
always trying a slot $s$ with $p_{i,s} = 1/S$. 
For example, if all the nodes in $N2_i$ are in \textbf{SS} 
state, then  node $i$ can take any free slot. 
In another situation, if all the nodes in $N2_i$ have already left  slot $s$, (since their two-hop neighbors are 
in \textbf{SS} state with respect to slot $s$), then node $i$ can take  slot $s$ with probability one. 
We propose a dynamic slot probability assignment mechanism in 
which node $i$ shares its slot-probability vector $P_i$, with all the nodes in $N2_i$, by explicitly 
transmitting an advertisement message.
We define the term \textbf{probability-budget}, $budget_{i}^{s}$ at a node $i$ for slot $s$,  as 
$budget_{i}^{s} = 1- \left(\left(\sum_{j\in N2_i} P_{j}(s)\right) + P_{i}(s)\right)$, where $P_{i}(s)$ is the 
slot-probability of node $i$ for slot $s$.

The negative value of $budget_{i}^{s}$ indicates that all the nodes in $N2_i$ including node $i$ are together 
trying for slot $s$ with probability
more than one. This will lead to higher level of  contention among neighboring nodes, to take slot $s$, and 
therefore, the probability that 
some node will succeed, would be lesser.  
%
On the other hand, the positive value of $budget_{i}^{s}$ indicates that all the nodes in $N2_i$ including node 
$i$ are 
together trying for slot $s$, 
with probability less than one. This will lead to the situation where the nodes are not trying with sufficient 
probability to get  slot $s$, and therefore, the probability that some node will succeed, is again less. 
Note that $\left(\sum_{j\in N2_i} \sum_{s=1}^{S}P_{j}(s)\right) = |N2_i|$, and therefore, the  value 
of $|budget_{i}^{s}| \ne 0$ implies that 
the distribution of probability sum of all neighboring  nodes,  i.e., $\sum_{j\in N2_i} P_{j}(s)$,  is not uniform across 
all slots.

%
  
%
  
%
The above discussion suggests that every node should try to maintain 
$budget_{i}^{s} \approx 0$ for every slot. In order to achieve the above goal, we propose a method for updating the 
slot-probability 
vector $P_i$ at each node $i$ as follows. 
When a node $i$ receives an update of slot-probability from some other node, it calculates the $budget_{i}^{s}$ 
for 
every slot and updates vector $P_i$, using the following equation. 
\abovedisplayskip=2pt
\belowdisplayskip=2pt
\begin{align}
\label{eq_budget}
 P_{i}^{new}(s) =
\left\{
\begin{array}{ll}
 0,  ~~~~~ \text{if } \exists j \in N2_{i} \text{ s. t. }  P_j(s) = 1 \\ 
 P_{i}(s) +  \mathcal{K}*budget_{i}^{s},  ~ \text{otherwise} 
\end{array}
\right. 
\end{align}
, where $\mathcal{K}$ is a constant, $\frac{1}{S} \le \mathcal{K} \le 1$. Node $i$ does not 
compensate complete probability budget 
at once, but it compensates a 
fraction of it, determined by the constant $\mathcal{K}$.
This is because, if most of the neighbors of node $i$ are also the 
neighbors of each other, then they will also update their slot-probabilities using the same budget. This could lead to  
an 
unstable situation, where the nodes keep on updating (increasing and decreasing) their slot-probabilities  instead of 
converging to a certain value. 

\section{Phase 2: Distributed Schedule Length Reduction (DSLR) Algorithm}
In this section, we first give the assumptions and a couple
of definitions for the DSLR algorithm, followed by data structures
maintained at each node $i$, to implement the DSLR algorithm
in a distributed manner. Then, we present the proposed DSLR
algorithm in detail.

\subsection{Assumptions, Definitions and Data Structures}
The DSLR algorithm takes the TDMA schedule of length $F$, generated in phase 1  as   input, and produces another
schedule of length $\le F$, as the output of the algorithm. 
During the execution of the DSLR algorithm, each 
node will possibly move to a  slot whose  Id is  lesser than the Id of its current slot. We use the term $slot_i$ to
denote the Id of the current slot of node $i$. However, throughout the execution of the DSLR algorithm,
a sensor node will transmit only in the slot defined by the initial TDMA schedule. Hence, the
notion of current slot Id of a node $i$ ($slot_i$) is logical instead of the physical slot where node $i$
will actually transmit during the execution of DSLR algorithm.

In the following, we introduce some definitions, which are
required to describe the proposed DSLR algorithm.

%
%
%
%

\noindent\textbf{Definition 1. }\textit{The receiver set $R_i$ of a node $i$ is defined as the
set of intended receivers of node $i$}.

\noindent\textbf{Definition 2. }\textit{The sender set, $S_i$,  of a node $i$ is defined as the
set of intended senders of node $i$}.

It is to be noted that  $R_i \subseteq N_i$ and  $S_i \subseteq N_i$. The size of the set $R_i$, $|R_i|$ depends upon the type of 
communication, viz., unicast, multicast or broadcast transmission.
Note that, in  WSNs, all nodes cooperate for a single task, and  only one application 
run at any given time. Hence, the information related to the set of receivers can be easily made available 
at MAC layer. 


\noindent\textbf{Definition 3. }\textit{A slot $s$ is said to be free at node $i$ if,
$\forall j \in ((\cup_{k \in N_i} S_k) \cup (\cup_{k \in R_i} N_k)), slot_j \ne s$.}

\noindent\textbf{Definition 4. }\textit{
The first-free slot, $FF_i$ of a node $i$ is defined
as the slot with minimum slot Id out of all free slots available 
at node $i$. If no free slot is available, the value of $FF_i$ is set to $0$.} 

%

\noindent\textbf{Definition 5. }\textit{
Two nodes $i$ and $j$ are said to be in conflict, if
the transmission of one node interferes at one of the receivers
of the other node, i.e., $N_i \cap R_j \ne \phi \vee N_j \cap R_i \ne \phi$. When two nodes are in conflict, they cannot take the same slot. 
}

In the following, we give a brief description of data
structures, which are maintained at each node i in the DSLR
algorithm.

\begin{itemize}
 \item A one dimensional vector $SV_i$, of size $|N_i| \times 1$, where
the value of $SV_i[order(j)] = s (\ne 0)$ specifies that
 $slot_j = s$. If $SV_i[order(j)] = 0$, it means that node
$i$ does not know the slot occupied by node $j$. Here,
$order(j)$ is a function which maps the Ids of every
node $j$  in $N_i$ to a unique number between $1$ and $|N_i|$.

\item A one dimensional vector $RV_i$, of size $F \times 1$, where
$0 \le RV_i[s] \le 3$, specifies the status of slot $s$ in terms
of its occupancy by the nodes in $N_i$, and $F$ is the initial schedule length.

\item A vector of vectors $NRV_i$, of size $|N_i| \times 1$, where the
value of $NRV_i[order(j)]$ specifies the vector $RV_j$.

\item A one dimensional vector $FFV_i$, of size $|N_i| \times 1$,
where the value of $FFV_i[order(j)] = FF_j$.

\item A one dimensional vector $maxSV_i$, of size $F \times 1$,
where the value of $maxSV_i[s]=u$, specifies that $u =
 max(slot_j : j \in N2_i \wedge FF_j = s)$. The value of
    $maxSV_i[s]$ is set to $0$, if $\forall j \in N2_i, FF_j \ne s$.

\end{itemize}


\subsection{Overview of DSLR Algorithm}
The basic idea behind the proposed DSLR algorithm is
fairly straightforward. In order to reduce the schedule length,
each node  in the network moves to another slot (if available) whose  Id is 
less than the Id of the slot currently occupied by it. The
real challenge is to perform this operation in parallel and
distributed manner, such that, two or more conflicting nodes
do not move to  the same slot, making the TDMA-schedule non-
feasible. In order to do this, nodes in the network, calculate
the status (free or occupied) of all the slots by exchanging
the messages with their neighboring nodes, and then move
to the \textit{first-free} slot. Additionally, while moving from one
slot to another, the nodes need to  ensure that no node in their two-hop neighborhood simultaneously moves to the same slot.
Note that, two nodes $i$ and $j$ in a two-hop neighborhood can
simultaneously go to their \textit{first-free} slots if the slot Ids of
their \textit{first-free} slots are not the same. In this sense, the DSLR
algorithm executes in parallel not only at different parts of
the network which are geographically apart, but also within
a two-hop neighborhood.

\subsection{Detailed Description of the DSLR Algorithm}

The DSLR algorithm executes in synchronized rounds,
where each round consists of four consecutive frames. Each
node $i$ periodically transmits a HELLO message  in
every frame of a round at  the  slot  as per the input TDMA schedule.

In the first frame of a round, the HELLO message sent
by node $i$  contains the value of its current slot Id ($slot_i$). 
Initially, the value of  $slot_i$ and the slot where node $i$ transmits the HELLO message
are  same. However, during the execution of the DSLR algorithm,  node $i$ may move to a new slot whose Id is  less than the value of $slot_i$,  and therefore, the value of  $slot_i$ and the Id of  the  slot for HELLO message transmission, may differ.

 If a node $j$ receives a HELLO message
 sent by node $i$ in the first frame of a round containing $s$ as the current slot Id, then
it can directly set $SV_j[order(i)] = s$.
 The values of $SV_i$, $RV_i$ and $FF_i$ are reset to $0$ at
the beginning of the first frame of every round. Additionally,
a copy of $SV_i$ is stored in the vector $SV_i^{old}$ before resetting
$SV_i$ to $0$ in the following manner.
\vspace{-1pt}
\[
 SV_i^{old}[order(k)] = SV_i[order(k)], \forall k : SV_i[order(k)] \ne 0
\]

In the second frame of a round, the HELLO message sent
by node $i$ contains the vector $RV_i$. The value of vector $RV_i$
can be calculated with the help of vector $SV_i$, that has been
populated during the first frame of the same round. Each
element of the vector $RV_i$ can take values between $0$ and $3$
inclusive, and it is calculated as follows.

\begin{enumerate}
\item for each  $j \in S_i$, $\text{ if } SV_i[order(j)] = s \wedge s \ne 0,    \text{ then } RV_i[s] = 1$.

\item   for each  $j \in N_i -  S_i$,  $\text{ if } SV_i[order(j)] = s \wedge s \ne 0,  \text{ then }   RV_i[s] = 2$. 

\item   for each  $j \in N_i$,  $\text{ if } SV_i[order(j)] = 0 \wedge SV_i^{old}[order(j)] = s \wedge s \ne 0,     \text{ then } RV_i[s'] = 3, \forall s' \le s$.

\item For all other slots s, which are not modified in steps (1), (2) and (3) above, set $RV_i [s] = 0$.
\end{enumerate}
\vspace{.1cm}

%
%
%
%
%
%
%
%
%

The value of $RV_i[s] = 0$ informs to the receivers that the
slot $s$ is free for all nodes in $N_i$. $RV_i[s] = 1$ indicates that slot
$s$ is not free for any node in $N_i$. $RV_i[s] = 2$ indicates that slot
$s$ is not free for the nodes in $S_i$ and free for the remaining
nodes in $N_i$. Finally, $RV_i[s] = 3$ indicates that the status
of slot $s$ is not known to node $i$. The reason behind setting
$RV_i[s] = 3$ for all slots with $Id \le SV_i^{old}[order(j)]$ is that,
once node $i$ does not receive a HELLO message from node
$j$, the node j may have moved from its old slot to another
slot. But, as we will see later, a node can only move to a slot
with lesser slot Id than its current slot Id, and thereby, it is not
required to set $RV_i[s] = 3$, $\forall s > SV_i^{old}[order(j)]$. On receipt
of a HELLO message from a node $j$ in the second frame of a
round,  node $i$ would set $NRV_i[order(j)] = RV_j$. If node $i$
does not receive a HELLO message from node $j$ in the second
frame of a round due to any reason such as frame loss, then
the value of $NRV_i[order(j)][s] = 1, 1 \le s \le F$.

In the third frame of a round, the HELLO message sent
by node $i$ contains $FF_i$. The value of $FF_i$ is calculated with
the help of vector $NRV_i$, that has been populated during the
second frame of the same round. Node $i$ considers a slot $s$
as free, if and only if the value of $RV_j[s] = 0, \forall j \in R_i$ (i.e.
$\forall j : i \in S_j$), and $RV_k[s] = 0$ or $2$, $\forall k \in N_i - R_i$. On receipt
of a HELLO message from a node $j$ in the third frame of a
round, the node $i$ would set $FFV_i[order(j)] = FF_j$.

In the fourth frame of a round, the HELLO message sent
by node $i$ contains $maxFree_i$. Here, $maxFree_i$ is a list of
ordered pairs $(s1, s2)$, such that $s1$ is a \textit{first-free} slot with
respect to at least one node in $N_i$ and $s2 = max(slot_j : j \in N_i \wedge FF_j = s1)$. The value of 
$maxFree_i$ can be calculated
with the help of vector $FFV_i$, that has been populated during
the third frame of the same round. On receipt of a HELLO
message from a node $j$ in the fourth frame of a round, node
$i$ would set the value of $maxSV_i[s1] = s2$, for each $(s1, s2)$
pair present in the received HELLO message, if $s1 < s2$ and
$maxSV_i[s1] < s2$.

Finally, in the beginning of a round, node $i$ with $slot_i = s$
can go to its \textit{first-free} slot, $FF_i$ if all two-hop neighbors
of node $i$ with the same \textit{first-free} slot as that of node $i$,
have occupied the slots with Id less than $s$. In other words,
$(j \in N2_i \wedge FF_j = FF_i) \Rightarrow slot_j < slot_i$, given $FF_i \ne 0$.
That is, $maxSV_i[FF_i] < slot_i$.



\section{Correctness of RD-TDMA and DSLR Algorithms}

\subsection{RD-TDMA  Algorithm}
\begin{theorem}
For any two nodes $i$ and $j$, if $j \in N2_i$, then both the nodes cannot be in \textbf{SS} state 
with respect to the same slot. 
\end{theorem}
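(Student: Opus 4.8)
The plan is to argue by contradiction: assume that two nodes $i$ and $j$ with $j \in N2_i$ are both in \textbf{SS} state with respect to the same slot $s$ at termination. Since the two-hop relation is symmetric ($j \in N2_i \Leftrightarrow i \in N2_j$) and, by the definition $N2_i = (\cup_{k \in N_i} N_k) \cup N_i - \{i\}$, the hypothesis $j \in N2_i$ splits into exactly two cases: either $j \in N_i$ (direct neighbors), or $j \in N_k$ for some common neighbor $k \in N_i$ (strictly two hops apart). I would first record the protocol invariants that drive both cases: (a) a node enters \textbf{SS} with slot $s$ only after receiving a grant for $s$ from \emph{every} node in its one-hop neighborhood; (b) when a node $y$ grants $s$ to $x$ it sets $P_y(s)=0$ and records the grant in $gV_y(s)$; (c) $y$ revokes that grant only upon receiving a REQ from $x$ for a slot other than $s$; (d) a node committed to $s$ sets $P(s)=1$, $slot_{}=s$, and thereafter transmits only REQ/IND messages for $s$, so it never requests a different slot; and (e) on receiving an IND for $s$ a neighbor sets its slot-probability for $s$ to $0$ permanently.

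For the first case, $j \in N_i$, the argument is short. Because $i$ is in \textbf{SS} with $s$, invariant (a) forces every node in $N_i$ -- in particular $j$ -- to have granted $s$ to $i$, so by (b) node $j$ set $P_j(s)=0$. By (c) this grant can be revoked only if $i$ later requests a slot different from $s$, which by (d) never happens once $i$ is committed; the subsequent IND from $i$ only reconfirms $P_j(s)=0$ through (e). Hence $P_j(s)=0$ holds in the terminated configuration. But $j$ being in \textbf{SS} with $s$ requires $P_j(s)=1$, a contradiction.

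For the second case, $j \notin N_i$ but $j \in N_k$ for some $k \in N_i$, the common neighbor $k$ acts as the arbiter. Since both $i$ and $j$ have $k$ in their one-hop neighborhood, invariant (a) forces $k$ to have granted $s$ to $i$ and also to $j$. However, granting condition 1 lets $k$ hold at most one outstanding grant for $s$ (it frees a previous grant, via condition 3, only when the earlier grantee moves away), and by (c)--(d) the grant $k$ gave to the committed node is never released. Thus $k$ cannot simultaneously hold grants of $s$ for the two distinct nodes $i$ and $j$, contradicting the requirement. Equivalently, once $k$ learns $i$ occupies $s$ it marks $s$ in its \textbf{OV}, which $j$ reads and uses to zero $P_j(s)$, again preventing $j$ from taking $s$.

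I expect the main obstacle to be the concurrency and message-timing reasoning hidden inside invariants (c)--(e): justifying, against arbitrary interleavings of REQ, GRANT, revoke and IND messages and possible frame losses, that a node actively contending for or occupying $s$ never grants $s$ away, and that a grant given to a node which ultimately commits is never revoked. Pinning down these ordering facts -- essentially that the grant/IND handshake serializes acquisition of each slot within every two-hop neighborhood -- is the delicate part; once they are established, the all-neighbor-grant requirement (a) makes both cases collapse to the exclusivity contradictions above.
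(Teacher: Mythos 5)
Your proposal is correct and follows essentially the same argument as the paper: the same case split on $j \in N_i$ versus $j$ reachable only through a common neighbor $k$, with the one-hop neighbor acting as arbiter that grants a slot to at most one contender, so both nodes cannot obtain all the grants needed to enter \textbf{SS} for the same slot. Your version is in fact more explicit than the paper's about the supporting invariants (grant revocation only on a new REQ, permanence of the grant to a committed node), which the paper leaves implicit.
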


\begin{proof}
 In order to enter \textbf{SS} state,  both the nodes $i$ and $j$ have to  get grants from all the 
nodes in $N_i$ and 
$N_j$ 
respectively. This is because
$j \in N2_i \Rightarrow j \in N_i \vee  \exists k \in N_i : j \in N_k$. 
When $j \in N_i$,  node $j$ will not take the slot, if it has already granted the slot to 
node 
$i$, and 
therefore both
cannot be in \textbf{SS} state with respect to the same slot. If $j \notin N_i$ then  $\exists k 
\in 
N_i$ such that  $j \in N_k$.
Due to symmetric channel  relationship, $j \in N_k \Rightarrow k \in N_j$. In order to be in 
\textbf{SS} state with respect to a 
slot $s$, 
both $i$ and $j$ need to get the grant from node $k$ for the slot $s$, 
which is not possible, because at a time, node $k$ will grant slot $s$, either to node $i$ or to 
node 
$j$ but not to both. Note that with packet loss, neither of them may be able to enter  SS state.
\end{proof}

\begin{theorem}
  Every node in the network will eventually enter \textbf{SS} state in a finite number of attempts.
\end{theorem}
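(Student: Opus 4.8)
The plan is to prove the claim by combining a monotonicity observation with a positive-probability-of-progress argument, and then invoking a geometric-trials (Borel--Cantelli) conclusion to obtain almost-sure termination. The natural induction variable is the number of nodes that have not yet reached \textbf{SS}. The base case is trivial; for the inductive step I would show that from any configuration in which at least one node is unscheduled, with probability bounded away from zero a further node enters \textbf{SS} within a bounded number of frames, after which the same reasoning applies to the strictly smaller set of unscheduled nodes.

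First I would establish monotonicity: once a node reaches \textbf{SS} it never leaves it, since the state diagram of Figure~\ref{fig:FSM} admits only the transition \textbf{SS}$\to$\textbf{TS}, both of which are stable, and such a node fixes $P_i(slot_i)=1$. Hence the set of scheduled nodes is non-decreasing over time, and by Theorem~1 the slots they hold form a conflict-free partial assignment across every two-hop neighborhood, so the partial schedule is always extendable. Next I would argue the availability of a free slot for every unscheduled node: because $S$ is chosen large enough to exceed $|N2_i|$ for each $i$, and at most $|N2_i|$ distinct slots can be held by the two-hop neighbors of $i$, there is always a slot unoccupied in $N2_i$. Under the uniform assignment $p_{i,s}=1/S$ (or, in the dynamic variant, since the entries of $P_i$ sum to one and only occupied slots are forced to zero) such a free slot is selected in \textbf{CS} with probability at least $1/S>0$. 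If $i$ picks a free slot $s$ and is, during its random back-off in \textbf{VS}, the unique node in its neighborhood requesting $s$, then none of the three granting conditions is violated at any node of $N_i$, so $i$ collects all grants and enters \textbf{SS}.

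It remains to show that this favorable event has probability bounded below within a window of bounded length, and this is the step I expect to be the main obstacle. The difficulty is contention: several mutually interfering nodes may target the same free slot and, through the grant--revoke mechanism, repeatedly block one another. I would argue that the independent random delays inserted before each REQ transmission give, in each window, a strictly positive probability that some contending node transmits alone in its neighborhood and thereby wins; equivalently, the probability that a fixed group of interfering nodes blocks each other forever is zero. Granting this, each bounded window increases the number of scheduled nodes with probability at least some $\varepsilon>0$. Since this count can increase at most $n$ times before all nodes are scheduled, the geometric-trials argument yields that, with probability one, every node reaches \textbf{SS} after finitely many attempts, which is the desired conclusion.
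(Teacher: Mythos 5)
Your argument is correct in substance and rests on the same underlying mechanism as the paper's proof --- a strictly positive probability of progress at each attempt --- but it is organized quite differently. The paper argues per node: for a node $i$ in \textbf{VS} state with respect to a free slot $s$, it writes the one-round transition probability explicitly as $P_i^{SS}(s)=p_{i,s}\left(\prod_{j\in N2_i}(1-P_j(s))\right)P_i^{succ}\prod_{k\in N_i}P_k^{succ}$, observes that every factor is positive (freeness of $s$ gives $P_j(s)<1$ for all $j\in N2_i$, and the successful-transmission probabilities are positive), and concludes that node $i$ enters \textbf{SS} in finitely many attempts, with packet loss only slowing convergence. You instead induct on the number of unscheduled nodes and argue that in each bounded window \emph{some} node is scheduled with probability at least $\varepsilon>0$. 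Your route buys a cleaner termination count (at most $n$ successful windows) and makes explicit two facts the paper leaves implicit, namely that \textbf{SS} is absorbing and that $S>|N2_i|$ guarantees a free slot; it is also somewhat more careful about needing a uniform lower bound rather than mere positivity round by round. On the other hand, the step you flag as the main obstacle --- breaking contention among mutually interfering nodes --- is exactly what the paper's product formula dispatches in one line: the event that node $i$ selects $s$ while no node of $N2_i$ simultaneously selects $s$, and all REQ and grant messages get through, already has positive probability, so no separate back-off or collision-resolution analysis is required. The paper's explicit expression for $P_i^{SS}(s)$ also feeds directly into the runtime bounds of Section 7, which your qualitative $\varepsilon$ would not support.
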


\begin{proof}
Let $P_i^{succ}$ be the probability of a successful packet transmission by a node $i$.
Let 
$P_i^{SS}(s)$ be the 
transition 
probability, defined as the probability that node $i$ will enter \textbf{SS} state from \textbf{VS} 
state with respect to a free 
slot $s$. Since node $i$ is in \textbf{VS} state, $ 0 < p_{i,s} \le 1$. 
A slot $s$ is free for node $i$,
if no node in $N2_i$ is already in SS state with respect to slot $s$, and no node in $N_i$ has 
already granted 
slot $s$ to some other node . 
In other words, slot-probability $P_j(s) < 1, \forall j \in N2_i$.
In this case, we need to show 
that the transition probability $P_i^{SS}(s) > 0$, irrespective of the slot-probabilities of other 
nodes in the 
network, so that node $i$ will eventually enter SS state in a finite number of attempts.
The value of $P_i^{SS}(s)$ is $ p_{i,s}*\left(\prod_{j \in N2_i} (1- P_j(s))\right)* (P_i^{succ}) * 
\prod_{k \in N_i}  
P_k^{succ}$, 
i.e., the probability that no node in $N2_i$ simultaneously selects $s$ along with node $i$, 
and node $i$ and all nodes in $N_i$ successfully transmit their REQ message.
Since $p_{i,s} > 0$, $P_j(s) < 1, \forall j \in N2_i$,
$P_i^{succ} > 0$ and  $P_k^{succ} > 0,   \forall k \in N_i$,  the inequality $P_i^{SS}(s) > 0$ would always be satisfied. Finally, the  
loss of 
any message (REQ, IND, REJECT) due to collision or channel impairment, would not affect the 
convergence of  
the algorithm. But, this will surely increase the convergence time. 
\end{proof}

\subsection{DSLR Algorithm}
In this section, we first prove the correctness of DSLR
algorithm, i.e., the DSLR algorithm always generates a feasible
schedule at the end of every round, if it initially starts with a
feasible schedule. Thereafter, we will show that there exists an
upper bound on the schedule length generated by the DSLR
algorithm.

\begin{lemma}
The slot $FF_i$ calculated by node $i$ in the third frame of a round is free according to definition 3.
\begin{proof}
According to the definition 3, a slot is free for node $i$ if  $\forall j \in (( \cup_{k \in N_i}  S_k) \cup ( \cup_{k \in R_i} N_k)), slot_j \ne s$. In the DSLR algorithm, node $i$ considers slot $s$ as free iff the value of $RV_j[s] = 0, \forall j \in R_j$ and $ RV_j[s] = 0 ~ or~  2, \forall j \in N_i - R_i$.
\end{proof}
\end{lemma}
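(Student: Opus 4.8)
The plan is to show that the operational test the algorithm uses to declare a slot ``free'' in the third frame logically implies the set-theoretic condition of Definition 3, and then note that $FF_i$ is by construction a slot passing that test. Concretely, $FF_i$ is the minimum-Id slot $s$ for which $RV_j[s] = 0$ for every $j \in R_i$ and $RV_k[s] \in \{0,2\}$ for every $k \in N_i - R_i$. Hence it suffices to prove that any such $s$ satisfies $slot_j \ne s$ for all $j \in (\cup_{k \in N_i} S_k) \cup (\cup_{k \in R_i} N_k)$, which is exactly what Definition 3 demands.

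First I would pin down what each $RV$ value asserts, reading off the four construction rules of the second frame: $RV_k[s] = 1$ iff some sender $j \in S_k$ has $slot_j = s$; $RV_k[s] = 2$ iff no sender of $k$ but some non-sender neighbor $j \in N_k - S_k$ has $slot_j = s$; $RV_k[s] = 3$ iff $k$ has lost a neighbor whose old slot was $\ge s$ (status unknown, via rule (3)); and $RV_k[s] = 0$ otherwise, meaning no node in $N_k$ occupies $s$. The two consequences I will rely on are $RV_k[s] = 0 \Rightarrow \forall j \in N_k,\ slot_j \ne s$ and $RV_k[s] \in \{0,2\} \Rightarrow \forall j \in S_k,\ slot_j \ne s$.

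Then I would split Definition 3 into its two conjuncts and discharge each. For the sender part, $\forall j \in \cup_{k \in N_i} S_k$: each $k \in N_i$ lies in $R_i$ (forcing $RV_k[s] = 0$) or in $N_i - R_i$ (forcing $RV_k[s] \in \{0,2\}$), so in all cases $RV_k[s] \ne 1$ and $RV_k[s] \ne 3$, whence no $j \in S_k$ sits on $s$. For the receiver part, $\forall j \in \cup_{k \in R_i} N_k$: for every $k \in R_i$ the test forces $RV_k[s] = 0$, so no $j \in N_k$ sits on $s$. Taking the union over $k$ in each case reproduces the condition of Definition 3.

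The hard part will be the asymmetric treatment of receiver-neighbors versus the rest: for $k \in N_i - R_i$ the test tolerates $RV_k[s] = 2$, and I must argue this still excludes precisely the nodes Definition 3 cares about (the senders $S_k$, not all of $N_k$), whereas for $k \in R_i$ the stronger requirement $RV_k[s] = 0$ is genuinely needed, since Definition 3 forbids \emph{any} neighbor of a receiver, sender or not, from occupying $s$. I also need to confirm that the value $3$ is excluded by the test for every relevant $k$, so that no silently-moved neighbor can be on $s$; this is the point at which the lost-HELLO handling of rule (3) must be invoked to justify that ``unknown'' is conservatively treated as ``not free.''
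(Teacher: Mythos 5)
Your proposal is correct and takes essentially the same approach as the paper: both reduce the lemma to showing that the algorithmic test ($RV_k[s]=0$ for all $k\in R_i$ and $RV_k[s]\in\{0,2\}$ for all $k\in N_i-R_i$) implies the condition of Definition 3. The paper's proof in fact stops after merely juxtaposing the two conditions, so the two implications you extract from the second-frame construction rules ($RV_k[s]=0\Rightarrow\forall j\in N_k,\ slot_j\ne s$ and $RV_k[s]\in\{0,2\}\Rightarrow\forall j\in S_k,\ slot_j\ne s$) together with the split of Definition~3 into the unions $\cup_{k\in N_i}S_k$ and $\cup_{k\in R_i}N_k$ supply exactly the derivation the paper leaves implicit.
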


\begin{theorem}[\textbf{Correctness without packet loss}]{\textit{
 In case of no packet loss, the DSLR algorithm 
always generates a feasible schedule, at the end of each round, 
if it starts with a feasible schedule, i.e., for any two nodes $i$ 
and $j$ in the network $(N_i \cap R_j \ne \phi) \vee (N_j \cap R_i \ne \phi) \Rightarrow
          slot_j  \ne slot_i.$}} 
\begin{proof}
 In order to prove the theorem, we need to show that 
the following condition always holds, even after the movement 
of node $i$ from its current slot ($slot_i$) to its \textit{first-free} slot ($FF_i$). 
\vspace{-1pt}
\[
 (N_i \cap R_j \ne \phi) \vee (N_j \cap R_i \ne \phi) \Rightarrow  slot_j \ne FF_i ~~~~~~~  (1)
\]

From the definition of free slot  (Def 3), we know that
$\forall j \in ((\cup_{k \in N_i} S_k) \cup ( \cup_{k \in R_i} N_k))$, $slot_j \ne FF_i$. 
Therefore, In order to prove that condition (1) always holds, it is sufficient
to show that,
\[
 (N_i \cap R_j \ne \phi) \vee (N_j \cap R_i \ne \phi) \Rightarrow j \in (\cup_{k \in N_i} S_k ) \cup (\cup_{k \in R_i} N_k )
\]
\vspace{.1cm}
Now $N_i \cap R_j \ne \phi$ implies that $\exists k \in N_i$ such that $j \in S_k$.
Therefore, $j \in (\cup_{k \in N_i} S_k)$. Similarly, ($N_j \cap R_i \ne \phi$) implies
that $ \exists k \in R_i$ such that $j \in N_k$. Therefore, $j \in (\cup_{k \in R_i} N_k)$.
Hence, $(N_i \cap R_j \ne \phi) \vee  (N_j \cap R_i \ne \phi) \Rightarrow j \in (\cup_{k \in N_i} S_k) \cup
(\cup_{k \in R_i} N_k)$.
\end{proof}
\end{theorem}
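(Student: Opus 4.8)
The plan is to argue by induction on the rounds of the DSLR algorithm. The base case is immediate, since the input schedule is assumed feasible. For the inductive step I would assume the schedule is feasible at the start of a round and show that the movements performed during that round preserve feasibility: fixing an arbitrary conflicting pair $i, j$, i.e. a pair with $(N_i \cap R_j \ne \phi) \vee (N_j \cap R_i \ne \phi)$, I would show that their slot Ids remain distinct after the round.

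First I would split on which of $i$ and $j$ actually move. If neither moves, distinctness is inherited from the inductive hypothesis. If exactly one, say $i$, moves to $FF_i$ while $j$ stays put, it suffices to prove $slot_j \ne FF_i$. By the preceding lemma the slot $FF_i$ is free in the sense of Definition 3, so $FF_i \ne slot_l$ for every $l \in (\cup_{k \in N_i} S_k) \cup (\cup_{k \in R_i} N_k)$; hence I only need to show that a node $j$ conflicting with $i$ lies in this set. Unwinding the two disjuncts of the conflict relation does this: from $N_i \cap R_j \ne \phi$ one obtains a witness $k \in N_i$ with $k \in R_j$, so $j \in S_k$ and thus $j \in \cup_{k \in N_i} S_k$; from $N_j \cap R_i \ne \phi$ one obtains $k \in R_i$ with $k \in N_j$, so by symmetry of the neighbor relation $j \in N_k$ and thus $j \in \cup_{k \in R_i} N_k$. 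The sub-case where only $j$ moves is identical after swapping $i$ and $j$.

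The remaining and genuinely harder case is when $i$ and $j$ move simultaneously, to $FF_i$ and $FF_j$ respectively; here the free-slot property alone is insufficient, because Definition 3 is evaluated against the slot Ids held at the \emph{start} of the round rather than against the new destinations. The first step is to note that every conflicting pair is a two-hop pair: from $N_i \cap R_j \ne \phi$ one extracts a common node $k$ with $k \in N_i$ and, by symmetry, $j \in N_k$, whence $j \in N2_i$ and $i \in N2_j$. The second step is to exploit the tie-break encoded in the movement rule $maxSV_i[FF_i] < slot_i$: if we had $FF_i = FF_j$, then since $j \in N2_i$ with $FF_j = FF_i$ the rule at $i$ forces $slot_j < slot_i$, while the symmetric rule at $j$ forces $slot_i < slot_j$, a contradiction. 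Hence two conflicting nodes cannot share a first-free slot while both are moving, so their new slots satisfy $FF_i \ne FF_j$.

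I expect the main obstacle to be exactly this simultaneous-movement case, and in particular the bookkeeping needed to certify that $maxSV_i[FF_i]$ genuinely equals $\max\{slot_l : l \in N2_i \wedge FF_l = FF_i\}$ over the \emph{entire} two-hop neighborhood of $i$. This rests on the four-frame propagation schedule correctly aggregating every neighbor's $maxFree$ list into $maxSV_i$; granting that aggregation, as fixed by the data-structure definitions, the contradiction above is short. A secondary point to pin down is that the inductive hypothesis is applied to the correct snapshot, since $SV_i$, $RV_i$, $FF_i$ and $maxSV_i$ are all derived from the pre-movement slot assignment of a single round.
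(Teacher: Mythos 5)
Your proposal is correct, and it is in fact more complete than the argument the paper gives. The paper's proof considers only the situation in which node $i$ moves to $FF_i$ while the conflicting node $j$ is evaluated at its pre-movement slot: it shows $slot_j \ne FF_i$ by combining Definition 3 with the set-membership step $(N_i \cap R_j \ne \phi) \vee (N_j \cap R_i \ne \phi) \Rightarrow j \in (\cup_{k \in N_i} S_k) \cup (\cup_{k \in R_i} N_k)$ — exactly your single-mover case, with the same unwinding of the two disjuncts. What the paper does not do is address the case where $i$ and $j$ move in the same round, for which $slot_j \ne FF_i$ (with $slot_j$ the old slot) is not the relevant condition; one needs $FF_i \ne FF_j$. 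You close this gap correctly by observing that conflicting nodes are two-hop neighbors and that the movement guard $maxSV_i[FF_i] < slot_i$ would force both $slot_j < slot_i$ and $slot_i < slot_j$ if $FF_i = FF_j$, a contradiction. This tie-break argument appears only in the algorithm description in the paper, never in the proof, so your decomposition into no-mover, one-mover, and simultaneous-mover cases buys a genuinely tighter correctness argument. Your residual caveat — that $maxSV_i[FF_i]$ must actually aggregate the maximum over all of $N2_i$, which depends on the four-frame propagation and on $i$ folding its own $FFV_i$ information into $maxSV_i$ — is a real bookkeeping obligation that the paper also leaves implicit; it is worth stating as an explicit assumption on the data-structure maintenance rather than leaving it inside the case analysis.
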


We give two other theorems to extend the scope of theorem 1 for the case when, 
due to erroneous wireless channel, the
HELLO message transmitted by node $i$ may not be received
by all the nodes in $N_i$.

\begin{theorem}[\textbf{Correctness with packet loss}] {\textit{
  If $s$ is the value of $FF_i$ calculated by node $i$ at
the beginning of the third frame of a round, then $s$ is a free
slot (Def. 3), with respect to node $i$, i.e., $(FF_i = s \ne 0) \Rightarrow$
slot $s$ is free.}}

\begin{proof}
 If the node i does not receive one or more HELLO
messages in the first or second frame of a round, due to
 erroneous channel, it is hard to say about a few slots, whether
they are free or not. This is because, some nodes in $N2_i$ might
have moved to their first-free slot at the beginning of the same
round.

If node $i$ does not receive a HELLO message from
node $j$ in the second frame of a round, then the value of
$NRV_i[order(j)][s] = 1, 1 \le s \le F$, and consequently,  $FF_i = 0$.

Now, consider the situation when the node $i$ has received
all the HELLO messages in the second frame, but say a node
$j \in N_i$ did not receive the HELLO message from some other
node $k$ in $N_j$, in the first frame of the same round. In that
case, the node $j$ would have set $RV_j[s] = 3$ for all slots with
Id less than or equal to the slot Id of last known slot Id $u$
of node $k$. Finally, node $i$ considers a slot $s$ as free, if and
only if the value of $RV_j[s] = 0, \forall j \in R_i$ and $RV_j[s] = 0$ or
$2$, $\forall j \in N_i -  R_i$. Hence, in this case, $FF_i > u$, if the HELLO message
sent by a node $k$ in $N2_i$ is lost, in the first frame of a round,
where u is the last known slot occupied by node $k$. Hence,
in case of packet losses, the above argument proves that the
node i would not assume a slot as free, if it is unsure about
the status of the slot.
\end{proof}

\end{theorem}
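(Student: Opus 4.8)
The goal is to establish soundness of node $i$'s freeness test under packet loss: whenever the third-frame computation yields $FF_i = s \ne 0$, the slot $s$ genuinely satisfies Definition 3. Since the preceding lemma already shows the test is sound when every HELLO message is delivered, the plan is to isolate the \emph{additional} uncertainty introduced by lost messages and argue that the protocol's conservative fallbacks --- marking unknown slots rather than assuming them free --- never allow an occupied slot to pass the test. I would organize the argument by the frame in which a loss can occur, since a first-frame loss corrupts a neighbor's $RV$ vector while a second-frame loss corrupts node $i$'s own view $NRV_i$.

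First I would dispose of second-frame loss at node $i$ itself. If $i$ fails to receive $RV_j$ from some $j \in N_i$, the protocol sets $NRV_i[order(j)][s] = 1$ for all $1 \le s \le F$. The freeness test requires $RV_j[s] = 0$ for $j \in R_i$ and $RV_j[s] \in \{0,2\}$ for $j \in N_i - R_i$; a uniformly-$1$ report fails this for every $s$, forcing $FF_i = 0$. Thus whenever a one-hop report is missing the implication holds vacuously, and it suffices to treat the case where $i$ receives all second-frame HELLOs.

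The substantive case is a first-frame loss observed by a neighbor $k \in N_i$: here $i$ receives a well-formed $RV_k$, but $RV_k$ was computed from a stale $SV_k$ because $k$ missed the slot announcement of some node whose last slot known to $k$ is $u$ (recorded in $SV_k^{old}$). The argument rests on two facts working together. The range-marking rule sets $RV_k[s'] = 3$ for every $s' \le u$, and status $3$ is neither $0$ nor $2$, so the freeness test rejects every slot $s' \le u$ through the $k$-coordinate. Simultaneously, the standing monotonicity invariant --- a node moves only to a strictly smaller slot Id --- guarantees the lost node's true current slot is at most $u$. Combining these, any slot that could actually be occupied by the lost node lies in $[1,u]$, which the test has already excluded, so $FF_i > u$ and the lost node cannot sit at $FF_i$. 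Quantifying over all neighbors $k$ and all announcements they may have missed then shows that no node in the interference set $(\cup_{k \in N_i} S_k) \cup (\cup_{k \in R_i} N_k)$ of Definition 3 occupies $s$.

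The main obstacle is pinning down that the status-$3$ range $[1,u]$ contains \emph{exactly} the slots the lost node might occupy, which is precisely where monotonicity is needed; I would therefore first record monotonicity as an invariant maintained across rounds (each round a node either keeps its slot or moves to a first-free slot with $FF_i < slot_i$). The secondary pitfall is the two-hop bookkeeping: a first-frame loss must be attributed to the one-hop neighbor $k$ whose $RV_k$ it distorts rather than to $i$, and the node whose announcement was lost generally lies in $N2_i$; conflating the frame in which a loss occurs with the vector it corrupts is the easy error to avoid. Closing these two points reduces the theorem to the conservative semantics of the $NRV = 1$ and $RV = 3$ markings, which the frame-by-frame case analysis above converts directly into freeness per Definition 3.
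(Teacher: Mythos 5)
Your proposal is correct and follows essentially the same route as the paper's proof: the vacuous case where a second-frame loss at node $i$ forces $NRV_i[order(j)][s]=1$ for all $s$ and hence $FF_i=0$, and the substantive case where a neighbor's first-frame loss triggers the $RV=3$ marking on all slots up to the last known slot $u$, yielding $FF_i>u$. Your only addition is to state explicitly the monotonicity invariant (nodes move only to strictly smaller slot Ids) that justifies restricting the uncertainty to $[1,u]$, which the paper invokes implicitly from its algorithm description rather than restating in the proof.
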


Next, we discuss the performance of the DSLR algorithm in
terms of its capability to reduce the length of a given schedule.
In order to do so, we define \textit{interference graph} $G = (V, E)$,
where V is the set of nodes in the WSN, and E is the set
of edges. An edge $e = (i, j)$ exists if and only if $N_i \cap R_j \ne \phi \vee 
N_j \cap R_i \ne \phi$, i.e., an edge $e$ exists between node $i$ and
$j$, if and only if node $i$ and node $j$ cannot take the same slot.
Further, the \textit{Interference degree} of the interference graph G is
defined as the maximum of the degrees taken over all vertices
of the graph, and it will be denoted by $\Delta$.

\begin{lemma}
  After executing the DSLR algorithm for sufficiently
large number of rounds, no node in the network has a 
free slot whose  slot Id is less than its own slot Id.

\begin{proof}
 We prove this by contradiction. Let $i$ be a node
for which $s$ is a free slot such that $s < slot_i$. If all the two-
hop neighbors of node $i$ with \textit{first-free} slot as $s$ have occupied
the slots with Id less than $slot_i$, then the node $i$ can move to
slot $s$, which makes slot $s$ as occupied, and it contradicts our
assumption. If the $slot_i$ is not maximum, and there exist some
other node $j$ in $N2_i$ such that $slot_j > slot_i$, then again either
$slot_j$ would be maximum with respect to free slot $s$ in $N2_j$ or
some other node $k$ in $N2_j$ such that $slot_k > slot_j$.  If $slot_j$ is
maximum, then node $j$ can move to slot $s$ making it occupied
with respect to node $i$, which contradicts our assumption.
Finally, by giving the same argument repetitively, we can say
that, starting from node i, there exists a sequence of nodes,
$S = i, j1, j2 \dots jn$, with slot $s$ as the free slot, such that
$slot_i < slot_{j1} < slot_{j2} \dots < slot_{jn}$. In this case, the nodes
in the reverse order of sequence S, starting from node $jn$ to
node $i$ can move to slot $s$, eventually making slot $s$ occupied
for node $i$.
\end{proof}
\end{lemma}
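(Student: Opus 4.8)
The plan is to split the argument into a \emph{termination} part and a \emph{fixed-point characterization} part, the phrase ``sufficiently large number of rounds'' in the statement signalling the need for the first. For termination I would introduce the integer potential $\Phi = \sum_{i \in V} slot_i$. By construction a node only ever moves to a slot of smaller Id, i.e. a move takes $slot_i$ down to $FF_i < slot_i$ while leaving every other node's value unchanged; hence $\Phi$ strictly decreases in any round in which at least one node moves. Since $\Phi$ is integer-valued, bounded below by $|V|$ (every slot Id is at least $1$) and above by $|V|\cdot F$ initially, only finitely many rounds can contain a move. In the no-packet-loss, deterministic setting this yields a \emph{static} configuration after at most $|V|(F-1)$ rounds: once a round passes with no move, the recomputed vectors $SV$, $RV$, $FF$, $maxSV$ are identical in the next round, so no move can be triggered, and by induction none ever is again.

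It therefore suffices to show that a static configuration contains no node $i$ with a free slot $s < slot_i$. I would argue this by contradiction. Suppose such an $i$ exists; then its first-free slot satisfies $f := FF_i \le s < slot_i$, so $i$ would like to move, and since the configuration is static its move condition must fail, i.e. $maxSV_i[f] \ge slot_i$. By the definition of $maxSV$ this produces a node $j_1 \in N2_i$ with $FF_{j_1} = f$ and $slot_{j_1} = maxSV_i[f] \ge slot_i$.

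The key structural observation comes next: because I work in broadcast mode, the feasibility invariant maintained throughout by the preceding correctness theorems forces any two nodes lying in each other's two-hop neighbourhood to hold \emph{distinct} slots, so in fact $slot_{j_1} > slot_i$. Moreover $j_1$ shares the \emph{same} first-free slot $f$, and since $f < slot_i < slot_{j_1}$ it too has a free slot below its own; being static it must likewise be blocked, giving $j_2 \in N2_{j_1}$ with $FF_{j_2} = f$ and $slot_{j_2} > slot_{j_1}$. Iterating builds a chain $i = j_0, j_1, j_2, \dots$, all sharing first-free slot $f$ with strictly increasing slot Ids. As slot Ids are bounded by $F$, the chain is finite and terminates at a node $j_n$ whose move condition cannot fail, so $maxSV_{j_n}[f] < slot_{j_n}$ with $f = FF_{j_n} < slot_{j_n}$ a strict reduction. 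Hence $j_n$ would move, contradicting staticity, and the lemma follows.

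I expect the main obstacle to be precisely the strict monotonicity of this chain, which is what guarantees it terminates rather than cycling. That monotonicity rests entirely on the invariant that two-hop neighbours competing for the same first-free slot never hold equal slot Ids; in broadcast mode this is exactly the distinctness enforced by feasibility. One should note, however, that if slot-Id ties among such competitors were possible (for instance for non-conflicting two-hop neighbours under unicast), two tied nodes would block each other and could permanently retain a smaller free slot, so the argument would then require an explicit tie-break (say by node Id) to restore a strict lexicographic ordering on $(slot_j, \text{node-Id})$ and hence recover termination of the chain.
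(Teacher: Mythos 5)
Your proposal is correct, and its core combinatorial idea -- the chain of two-hop-related nodes sharing the same free slot with strictly increasing slot Ids, bounded by $F$ -- is exactly the engine of the paper's own proof. Where you differ is in the surrounding structure, and your version is the more rigorous of the two. The paper gives a single informal contradiction argument: it builds the chain and then asserts that the nodes, taken in reverse order, ``can move to slot $s$, eventually making slot $s$ occupied,'' leaving both the meaning of ``sufficiently large number of rounds'' and the termination of this cascade implicit. You instead split the claim into a termination lemma (the potential $\Phi=\sum_i slot_i$ is integer-valued, bounded below, and strictly decreases on any round with a move, so a static configuration is reached) and a fixed-point characterization (in a static configuration the blocking condition $maxSV_i[f]\ge slot_i$ propagates along the chain until its last element has no blocker and must move, a contradiction). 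This buys a precise statement of what ``sufficiently many rounds'' means and closes the liveness gap the paper waves at. Your final caveat is also substantive: the strict increase $slot_{j_{k+1}}>slot_{j_k}$ is not a consequence of the move rule alone (which only yields $\ge$) but of feasibility forcing distinct slots on conflicting two-hop neighbours; this holds in broadcast mode but can fail under unicast, where two non-conflicting two-hop neighbours with equal slots and equal first-free slots would mutually block forever. The paper does not address this, even though it advertises the DSLR algorithm for unicast and multicast, so your proposed node-Id tie-break is a genuine (and needed) strengthening rather than mere pedantry.
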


The TDMA-schedule generated by greedy graph colouring
based distributed algorithms also maintains the same property
as mentioned in lemma 2. In the greedy approach, the exact
schedule depends on the order in which the nodes have been
coloured (provided slot). In our case, it solely depends upon
the input schedule provided to the DSLR algorithm. Another
way to see the similarity between the DSLR and the greedy
approach is that, in greedy approach, we gradually pick colours
one by one till all the nodes in the graph are coloured, whereas,
in the DSLR algorithm, we start with a coloured graph and
incrementally drop the colours one by one till no more colours
can be dropped, and still have the whole graph coloured.

\begin{theorem}[\textbf{Upper bound on SL}]{\textit{
The schedule length, SL generated by the DSLR
algorithm is always less than or equal to $\Delta + 1$.
}}
 
\begin{proof}
 Let there exist a node $i$ in the network such that
$slot_i > \Delta+1$. In this case, since no node can have degree more
than $\Delta$, as per the pigeon hole principle, there should exist at
least one slot s, less that $slot_i$, which has not been occupied
by any node adjacent to node $i$ in graph G, and therefore slot
$s$ is free with respect to node i. But this contradicts Lemma 1.
Hence, no node in the network can have slot Id greater than
$\Delta + 1$, and $SL \le \Delta + 1$.
\end{proof}
\end{theorem}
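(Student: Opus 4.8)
The plan is to argue by contradiction, combining a pigeonhole count with the stabilization lemma established just above (that after sufficiently many rounds no node retains a free slot whose Id is below its own slot Id). First I would suppose the claim fails, so that in the converged schedule there is some node $i$ with $slot_i > \Delta + 1$. The goal is then to exhibit a free slot $s < slot_i$ for this node, which immediately contradicts the stabilization lemma and completes the argument.

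The core preliminary step is to pin down exactly which nodes can block a low-numbered slot for $i$. By Definition~3, the only nodes whose assignments can prevent a slot $s$ from being free at $i$ are those in $B_i = (\cup_{k \in N_i} S_k) \cup (\cup_{k \in R_i} N_k)$, so I would show that $B_i$ coincides with the interference-neighborhood of $i$ in $G$. One inclusion is direct: if $j \in \cup_{k \in N_i} S_k$, then some $k \in N_i$ has $j \in S_k$, whence $k \in R_j$ and $k \in N_i \cap R_j$, so $(i,j) \in E$; symmetrically $j \in \cup_{k \in R_i} N_k$ forces $N_j \cap R_i \ne \phi$ and again $(i,j) \in E$. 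The reverse inclusion is precisely the implication already proved in the correctness theorem (without packet loss). Hence the number of distinct slots occupied by the nodes of $B_i$ is at most $\deg_G(i) \le \Delta$.

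With this bound in hand the pigeonhole step is routine: since $slot_i > \Delta+1$, the slots $1, 2, \dots, \Delta+1$ are all strictly below $slot_i$, giving $\Delta+1$ candidate slots, whereas the conflicting nodes in $B_i$ occupy at most $\Delta$ of them. Therefore at least one slot $s \in \{1, \dots, \Delta+1\}$ is occupied by no node of $B_i$, and by Definition~3 this $s$ is free for $i$ with $s < slot_i$ --- contradicting the stabilization lemma. I would conclude that every node satisfies $slot_i \le \Delta+1$ in the converged schedule, so the maximum used slot Id, namely the schedule length $SL$, is at most $\Delta+1$.

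I expect the only genuinely delicate point to be the identification of the blocking set $B_i$ with the graph neighborhood of $i$; once that is settled the counting and the appeal to the stabilization lemma are immediate. A secondary point worth stating explicitly is that $\Delta$ must be read as the degree in the interference graph $G$ defined through the conflict relation of Definition~5, not the communication-degree $|N_i|$: it is conflicts, not mere adjacency, that restrict slot reuse, and keeping this distinction consistent with the definition of $E$ is exactly what makes the bound tight at $\Delta+1$.
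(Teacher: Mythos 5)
Your proof is correct and follows essentially the same route as the paper's: assume $slot_i > \Delta+1$, use pigeonhole over the at-most-$\Delta$ conflicting neighbors to produce a free slot below $slot_i$, and contradict the stabilization lemma. The only difference is that you explicitly verify that the blocking set $(\cup_{k \in N_i} S_k) \cup (\cup_{k \in R_i} N_k)$ of Definition~3 coincides with the neighborhood of $i$ in the interference graph $G$ --- a step the paper leaves implicit --- which is a worthwhile addition but not a different argument.
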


\begin{table*}[t]
  \centering
 
  \begin{tabular}{|c|p{10cm}|}
    \hline
    \textbf{Notation} & \textbf{Description} \\
    \hline	
      $t_i^{CS}(r)$ & Time spent by node $i$  in CS state at round $r$. \\
      $t^{req}$ & Time required to transmit the REQ message. \\
      $t_i^{w}(r)$ & Waiting time of node $i$ in VS state at round $r$, before transmitting the REQ message.  \\ 
      $t_i^{w1}(r)$ & Waiting time of node $i$ in VS state at round $r$, after transmitting the REQ message.  \\ 
      $t_{i}^{VS}(r)$ & Total time spent by node $i$ in VS state at round $r$, i.e., $t_i^{w}(r) + t^{req} + t_i^{w1}(r)$ \\ 
      $t_{i}^{SS}$ & Total time taken by node $i$ to reach SS state \\ 
      $R_i$ & Total number of rounds taken by node $i$ to reach SS state \\ 
      $R$ & Total number of rounds taken by all the nodes in the network to reach SS state \\ 
       $q_{i,s}(r)$ & The probability of node i  entering SS state with  slot $s$ in a round r.   \\
      $q_i(r)$ & The probability of a node i entering SS state, in  round  r  i.e.,  $\sum_{s=1}^{S} q_{i,s}(r)$ \\	  
      $q_i^{min}$ & Minimum value of $q_i(r) ~  \forall r$. \\
      $q^{min}$ & Minimum value of $q_1(r) ~  \forall r$. The subscript 1 has been omitted for sake of simplicity \\
      $\beta_{i}$ & The number of 1's in row i of matrix B. \\
      $\alpha_s$ &  The number of 1's in column $s$ of matrix B, excluding first row. \\
      $n$ &  Number of nodes in the network. \\
       \hline
 \end{tabular}
  \caption{\textbf{The  set of notations used in section \ref{complexity}}} 
  \label{table:complexity}
\end{table*}

\section{Complexity Analysis of RD-TDMA and DSLR Algorithms}
\label{complexity}
In this section, we analyze the runtime performance of RD-TDMA and DSLR algorithms in terms of time taken by the algorithms to perform 
TDMA scheduling and schedule length reduction respectively. 

\subsection{Runtime of RD-TDMA Algorithm}
Let $T$ be the time when all the nodes in the network reach SS state. Our goal is to calculate $E[T]$.
Table \ref{table:complexity} summarizes the set of notations used in this section.
Each node $i$ in the network runs the RD-TDMA in rounds till it finishes the slot selection, i.e., 
reaches the SS state. We denote the time duration between the events  when 
node $i$ enters CS state for the $r^{th}$ time  and  leaves the VS state by $t_i(r)$. A higher level 
behaviour of node $i$ while it is contending for a slot to select, along with 
the description of  various time delays, is described in the following. (also see Fig 
\ref{fig:analysis}). 
\begin{figure*}[t]
\centering
\includegraphics[scale=.55]{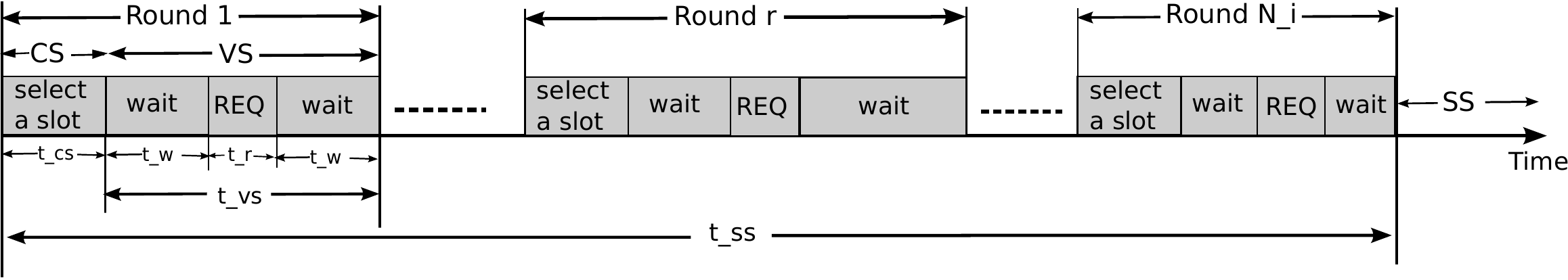}
\caption{Description of various time delays while a  node i is contending for a slot 
using RD-TDMA algorithm}
\label{fig:analysis}
\end{figure*}
\begin{itemize}
\item Initially, a node $i$ enters \textbf{CS} state, where it randomly 
selects a slot $s$, as per the probability distribution defined by vector $\boldmath{P_i}$. 
Let  $t_i^{cs}(r)$ be the time spent by  node $i$ during round $r$ in CS state, i.e., the time 
required by node $i$  to select a slot. We assume that the value of $t_i^{cs}(r)$ is fixed across 
all the sensor nodes in the network. The actual value of $t_i^{cs}(r)$ depends upon the 
underlying hardware (processor) that has been  used to execute the protocol instructions.
\item In \textbf{VS} state of round $r$, node $i$ first  waits for a random duration of time $t_i^w(r)$, which is
uniformly distributed between 0 and S, and then it transmits a REQ message for  slot $s$. 
The time required to transmit a REQ message $t^{req}$ depends upon the size of REQ message and 
 data rate for transmission. 
\item If node $i$ does not receive all grants or receives a REJECT message, 
withing  $S$ time units after transmitting the REQ message,  it goes back to CS state 
and restart 
the process again. 
After transmitting a REQ message,  the time spent by node $i$ in VS state of round $r$,  
$t_i^{w1}(r)$, is also a random variable. $t_i^{w1}(r) = S$, if no REJECT message is received;
otherwise, the actual value of $t_i^{w1}(r)$ depends upon the reception time of REJECT message. In 
this analysis, we have considered $t^{w1}(r)$ as uniform random variable. The total time 
spend by node $i$ in VS state of round $r$ is  denoted by $t_i^{VS}(r)$, and it is equal to 
$t_i^{w}(r) + t^{req} + t_i^{w1}(r)$.
\item Finally, node $i$ enters SS state from VS state, if it receives GRANT messages from all of 
its one hop neighbors for the transmitted REQ message. We denote by $t_i^{SS}$  the total time 
taken by node $i$ to reach SS state, and let $R_i$ be the corresponding number of rounds.
The value of  $t_i^{SS}$  can be written as:

\vspace{-13pt}

\begin{align}
  t_i^{SS} & =  \sum_{r=1}^{R_i} t_i(r)      
\end{align}
\end{itemize}
, where  $t_i(r)$ is defined as the duration of  round $r$.
The above summation is an example of sum of  independent random 
variables ($t_i(r)$) for   total number of rounds ($R_i$), which is also a random variable.
To calculate the expected value of $t_i^{SS}$, we use conditional expectation and the
law of iterated expectation, as follows. \newline

\vspace{-5pt}

\noindent$
\begin{array}{r@{}l}
 & E[t_i^{SS}/(R_i = n)]  = E\left[\sum_{r=1}^{n} t_i(r)\right]  = \sum_{r = 1}^{n} E[t_i(r)] \vspace{3pt} \\ 
 & = \sum_{r = 1}^{n} E[t_i^{CS}(r) + t_i^{VS}(r)] \vspace{3pt}  \\
 & =  \sum_{r = 1}^{n} E[t_i^{CS}(r) + t_i^{w}(r) + t^{req} + t_i^{w1}(r)] \vspace{3pt}     \\ 
 & =  \sum_{r = 1}^{n} E\left[t_i^{CS}\right] +  E[t_i^w(r)] + E[t^{req}] + E[t_i^{w1}(r)] \vspace{3pt} \\
 & = n(t_i^{CS} + t^{req}] + \sum_{r = 1}^{n} \left(  E[t_i^w(r)] + E[t_i^{w1}(r)] \right) \vspace{3pt}  \\ 
 & = n(t^{CS} + t^{req}) + \sum_{r = 1}^{n} \left( S/2 + S/2\right) \vspace{3pt}  \\  
 & = n(t^{CS} + t^{req} + S)    
\end{array}
$


\noindent Therefore, 
  \begin{align}
  E[t_i^{SS}/R_i] &= R_i(t_i^{CS} + t^{req} + S) \nonumber   \\ 
  E[t_i^{SS}] &=  E[E[t_i^{SS}/R_i]] =  E[R_i(t^{CS} + t^{req} + S)] \nonumber   \\ 
  &= E[R_i]*(t^{CS} + t^{req} + S)
  \end{align}

  To calculate the expected value of $R_i$, we define $q_i(r)$ as the probability of node $i$ 
entering  the SS state from VS state in  round $r$. The value of E[$R_i$] can be calculated as 
follows. 
\abovedisplayskip=2pt
\belowdisplayskip=2pt
\begin{align}
   E[R_i] & =  \sum_{r=1}^ \infty r * P(R_i = r) \nonumber \\
   & = \sum_{r=1}^ \infty r * \left(\prod_{j=1}^{r-1}(1 -  q_i(j))\right) * q_i(r)
  \end{align}

  Finding a closed form formula for the above sum is not feasible as the value of $q_i(r)$  
depends upon the probability distribution $P_i$, using which the node $i$ tries to get 
different slots, and also on the probability  distribution $P_j$  of its neighboring nodes. The value of 
$P_i$ is not fixed and it keeps changing in every round depending upon the number of nodes in $N2_i$ 
which  have already entered SS state. Instead of trying to find the accurate value of 
$E[R_i]$, we calculate an upper bound on $E[R_i]$, with the help of following theorem. \vspace{2pt}
\begin{theorem}
 Let $q_i^{min} = min(q_i(r):  1\le r \le \infty$). Then  $E[R_i] \le \frac{1}{q_i^{min}}$ \vspace{3pt}
\begin{proof}

\noindent$
\begin{array}{r@{}l}
 
   E[R_i] & = \sum_{r=1}^ \infty r * P(R_i = r) = \sum_{r=1}^ \infty \sum_{t = 1}^r P(R_i = r)  \vspace{3pt} \\
   
   & = \sum_{t=1}^ \infty \sum_{r = t}^\infty P(R_i = r) = \sum_{t=1}^ \infty P(R_i \ge t)  \vspace{3pt} \\
   & = \sum_{t=1}^ \infty  \prod_{r=1}^t (1-q_i(r))  \vspace{3pt}  \\ 
   & \le  \sum_{t=1}^ \infty  \prod_{r=1}^t (1-q_i^{min})  \vspace{3pt} \\
   & = \sum_{t=1}^ \infty (1-q_i^{min})^t =   \frac{1}{q_i^{min}} 
\end{array}
$
\end{proof}
\end{theorem}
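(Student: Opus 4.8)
The plan is to bound $E[R_i]$ through the standard tail-sum (``layer-cake'') representation of the expectation of a positive integer-valued random variable, and then to dominate each tail probability by a geometric term built from the uniform lower bound $q_i^{min}$ on the per-round success probabilities. The expression $P(R_i = r) = \left(\prod_{j=1}^{r-1}(1-q_i(j))\right) q_i(r)$ derived just before the theorem is the only input needed about the distribution of $R_i$.

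First I would rewrite the factor $r$ inside $E[R_i] = \sum_{r=1}^\infty r\,P(R_i = r)$ as $\sum_{t=1}^r 1$ and interchange the two summations. Since every summand is nonnegative, Tonelli's theorem justifies the swap, giving
\[
E[R_i] = \sum_{r=1}^\infty \sum_{t=1}^r P(R_i = r) = \sum_{t=1}^\infty \sum_{r=t}^\infty P(R_i = r) = \sum_{t=1}^\infty P(R_i \ge t).
\]
Next I would identify each tail probability: the event $\{R_i \ge t\}$ is exactly the event that node $i$ has failed to enter the SS state in each of the first $t-1$ rounds, which by the product form quoted above equals $\prod_{r=1}^{t-1}(1 - q_i(r))$ (an empty product, hence $1$, when $t=1$).

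The final step replaces each factor by its worst case. Because $q_i(r) \ge q_i^{min}$ for every $r$ by the definition of $q_i^{min}$, we have $1 - q_i(r) \le 1 - q_i^{min}$, so $P(R_i \ge t) \le (1 - q_i^{min})^{t-1}$. Reindexing and summing the geometric series then yields
\[
E[R_i] \le \sum_{t=1}^\infty (1 - q_i^{min})^{t-1} = \sum_{k=0}^\infty (1 - q_i^{min})^{k} = \frac{1}{q_i^{min}}.
\]

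The main obstacle here is not a computation but a hypothesis that must be secured: the bound is meaningful only if $q_i^{min} > 0$, since otherwise the geometric series diverges. This positivity is precisely what Theorem 2 supplies, as the transition probability $P_i^{SS}(s)$ from VS to SS with respect to a free slot is shown to be strictly positive, forcing $q_i(r) > 0$ in every round and hence $q_i^{min} > 0$. A secondary point worth flagging is that the clean product form for $P(R_i \ge t)$ rests on treating the per-round success events as (conditionally) independent across rounds; this is the same modeling assumption that underlies the already-stated formula for $P(R_i = r)$, so no new justification is required, and the interchange of summation order is routine given nonnegativity of all terms. I would also take care with the index shift in the last display, since an off-by-one there changes the geometric sum to $\tfrac{1-q_i^{min}}{q_i^{min}}$; starting the summed geometric series from $k=0$ is what produces the stated bound $1/q_i^{min}$.
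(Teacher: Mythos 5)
Your proof is correct and follows essentially the same route as the paper: rewrite $E[R_i]$ via the tail-sum identity $\sum_{t\ge 1} P(R_i \ge t)$, identify each tail as a product of per-round failure probabilities, and dominate it by a geometric series in $1-q_i^{min}$. In fact your version is slightly cleaner, since you correctly use $P(R_i \ge t) = \prod_{r=1}^{t-1}(1-q_i(r))$ where the paper writes the product up to $t$ (and then compensates with a matching off-by-one in the geometric sum), and you explicitly flag the need for $q_i^{min} > 0$, which the paper leaves implicit.
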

 
Node $i$ can enter SS state in a round if 
no node in $N2_i$ is already in VS state or SS state with respect to the same slot, which has 
been selected by node $i$ in VS state in the same round.  Let  $q_{i,s}(r)$ be the  
probability of node $i$ entering  SS state from VS state in round $r$ with slot $s$.  Clearly, the value of $q_{i,s}(r)$ 
depends upon the 
probability with  which node $i$  tries to select slot $s$ and also on the 
probabilities of other  nodes in $N2_i$ with  which they try to select the same slot. The 
equation in (5) gives the value of  $q_{i,s}(r)$ in terms of current slot selection probabilities of  slot 
$s$ at node $i$ and at all the nodes in $N2_i$. Here, we assume that all the nodes 
have  exactly $S-1$ two-hop neighbors, i.e.,  $\forall i: |N2_i| = S-1$.
Note that, $S$ is always taken to be greater than the maximum number of two-hop neighbors for any node in the network. 
Therefore, the above assumption would give the worst case analysis for the 
expected runtime of RD-TDMA algorithm.
\begin{align}
q_{i,s}(r) =  \left(\displaystyle\prod_{j=1, j\ne i}^{S} (1- p_{j,s}(r))\right)* p_{i,s}(r)
\end{align}
, where $p_{j,s}(r)$ is the probability of node $j$ selecting node s in CS state of round $r$. 

%


In order to simplify the Eq. 5, let us rearrange the IDs of the nodes in the following manner. 


\textbullet \textcolor{white}{~} The ID of node i is changed to 1. 

\textbullet \textcolor{white}{~}  The IDs of nodes in $N2_i$ would be set from 2 to $S$. The 
ordering among these nodes could be arbitrary. 

\textbullet  \textcolor{white}{~}  The IDs of all other nodes in the network would become $S+1$ to $n$, where $n$ is the total 
number of nodes in the network. The ordering 
among these nodes could also be arbitrary. 

Note that the above rearrangement will not change the probability of node i entering SS state in a 
round. After this transformation $q_i(r)$, $q_{i,s}(r)$, $q_i^{min}$, and $E[R_i]$ would become  
 $q_1(r)$, $q_{1,s}(r)$, $q_1^{min}$, and $E[R_1]$ respectively. Further, we can also omit the subscript 1 from  
all the expressions for sake of clarity. The Eq. 5  can now be rewritten as, 

\vspace{-11pt}

\begin{align}
q_s(r) =  \left(\prod_{j=2}^{S} (1- p_{j,s}(r))\right)* p_{1,s}(r)
\end{align}

In terms of $q_s(r)$, the value of $q(r)$ can be written as,
\begin{align}
 q(r) = \sum_{s=1}^{S} q_s(r)
\end{align} 

Now our aim is to find out the value of $q^{min}$.
First, we consider a relatively simpler case of single hop WSNs, in which   
all the nodes in the network interfere with each others transmission. 
Let, after $r$ rounds, exactly $m$ number of nodes are in SS state. The remaining $S-m$ 
nodes which are not in SS state, set their slot probability to $\frac{1}{(S-m)}$, for all
unoccupied slots, i.e. $p_{j,s}(r) = \frac{1}{(S-m)}$. Substituting the value of $p_{j,s}(r)$ in Eq. 
6, and further in Eq. 7 we get,

\vspace{-5pt}

\begin{align}
q(r) &=  \sum_{s=1}^{S-m} \left(1 - \frac{1}{S-m}\right)^{S-m-1} * \frac{1}{S-m} \nonumber 
\\
    &=  	\left(1 - \frac{1}{S-m}\right)^{S-m-1}
\end{align}

The value of $q(r)$ as given in Eq. 8,  is a  monotonically decreasing function with respect 
to $S-m$, and converges to $\frac{1}{e}$ as $(S-m) \rightarrow \infty$, since  $\frac{1}{q(r)} = 
\left(1 + \frac{1}{S-m-1}\right)^{S-m-1}$, which is one of the definition of mathematical constant $e$ as 
$S-m-1 \rightarrow \infty$. This concludes that  $q^{min} = 
\frac{1}{e}$. Substituting the value of $E[R_i]$ in Eq. 3 by $\frac{1}{q^{min}}$, we get $E[t_i^{SS}] \le e(t^{CS} + t^{req} + S)$,  and this is of the order of $O(S)$. Note that,  in case of 
single-hop network, the number of slots  $S$ is equal to the number of nodes $n$ in the 
network.
  
Now, we will consider a more generalized situation of multi-hop WSNs, in which not  all the nodes in 
the network interfere with each others transmission. Let us define a binary square matrix, $B$ of 
size $S$, in the following manner.

\vspace{-10pt}

\begin{align}
b_{i,s} =
\left\{
\begin{array}{ll}
 1,  & \mbox{if $p_{i,s}(r) > 0$} \\
0,   & \mbox{otherwise}
\end{array}
\right. \nonumber
\end{align}

The matrix P and B show an example of probability matrix and its corresponding binary 
transformation for $S = 3$. 
\vspace{-10pt}

\begin{align}
P &= \begin{pmatrix}
    1/3 & 1/3 & 1/3 \\
    1/2 & 1/2 & 0 \\
    1/2 & 0 & 1/2 
\end{pmatrix}
&
B &= \begin{pmatrix}
    1 & 1 & 1 \\
    1 & 1 & 0 \\
    1 & 0 & 1 \nonumber
\end{pmatrix}
\end{align}

Let $\beta_{i} = \sum_{s=1}^{S} b_{i,s}$ (number of 1's in row $i$ of matrix B) and 
$\alpha_s = \sum_{i=2}^S b_{i,s}$ (number of 1's in column $s$ of matrix B, excluding first row). The Eq. 5  can be rewritten in terms of $b_{j,s}$ and 
$\beta_{j}, 1 \le j \le S$, as follows.

\vspace{-5pt}

\begin{align}
 q_s(r) = \frac{b_{1,s} }{\beta_{1}}  \prod_{j=2}^{S} \left( 1 - 
\frac{b_{j,s}}{\beta_{j}}\right)
\end{align} 

Let $B^{min}$ be the matrix for which value of $q(r)$ is minimum. To 
find out the properties of $B^{min}$, we start with the hypothesis that 
$q(r)$ would be minimum, if none of the nodes in $N2_1$, is in SS state. This implies that 
node 1 is still transmitting in 
all the slots with probability $\frac{1}{S}$, i.e., $b_{1,s} = 1,  1 \le s \le 
S$.
Now, we will present two lemmas based on the aforementioned hypothesis, and this hypothesis will be used to find 
out the properties of $B^{min}$ in theorem 7, where we also explain the need for it.

\begin{lemma}
For a given instance of matrix B, let  $b_{1,u} = 1, 1 \le u \le S$, and for a 
slot $s$, $q_s(r) \le q_u(r),  \forall u \ne s$.   Then, for any row   i, the value of $q(r)$ reduces or 
remains  
same, if $b_{i,s}$ is changed from 1 to 0.

\begin{proof}
Let $q^{old}(r)$ and $q^{new}(r)$ be the respective value of $q(r)$ before and after the 
conversion of  $b_{i,s}$ from  1 to 0. We need to show that 
$q^{old}(r) \ge q^{new}(r)$. Similarly, $q_s^{old}(r)$ and $q_s^{new}(r)$ can be defined. 
Since $b_{1,u} = 1, \forall u, 1 \le u \le S$,
the $q_s^{old}(r)$ can be written as,

\begin{align}
 q_s^{old}(r) &= \frac{1}{S} \prod_{j=2}^{S} \left(1 - 
\frac{b_{j,s}}{\beta_{j}}\right)  \nonumber \\
 &= \frac{1}{S}\left(\prod_{j=2, j \ne i}^{S} \left(1 - 
\frac{b_{j,s}}{\beta_{j}}\right)\right) * \left(1 - \frac{1}{\beta_{i}}\right) 
\end{align} 
and since  $b_{i,s}$ becomes 0, after the conversion,   $q_s^{new}(r)$ would be,
\begin{align}
 q_s^{new}(r) = \frac{1}{S}\left(\prod_{j=2, j \ne i}^{S} \left(1 - 
\frac{b_{j,s}}{\beta_{j}}\right)\right)
\end{align} 
Therefore, from Eq. (10) and (11), we get,
\abovedisplayskip=2pt
\belowdisplayskip=2pt
\begin{align}
 q_s^{new}(r) - q_s^{old}(r) =   \frac{q_s^{old}(r)}{\beta_{i}-1}  
\end{align} 

Similarly, for all other slots $u \ne s$ and $b_{i,u} = 1$  
\begin{align}
 q_u^{old}(r) &= \frac{1}{S} \prod_{j=2}^{S} \left(1 - 
\frac{b_{j,u}}{\beta_{j}}\right)  \nonumber \\
 &= \frac{1}{S}\left(\prod_{j=2, j \ne i}^{S} \left(1 - 
\frac{b_{j,u}}{\beta_{j}}\right)\right) * \left(1 - \frac{1}{\beta_{i}}\right) 
\end{align} 
and since  $b_{i,s}$ becomes 0, after the conversion, the value of $\beta_i$  will reduced by one. 
\begin{align}
 q_u^{new}(r) = \frac{1}{S}\left(\prod_{j=2, j \ne i}^{S} \left(1 - 
\frac{b_{j,u}}{\beta_{j}}\right)\right) * \left(1 - \frac{1}{\beta_{i}-1}\right) 
\end{align} 
Therefore, from Eq. (13) and (14), we get,
\begin{align}
 q_u^{new}(r) - q_u^{old}(r) =   -\frac{q_u^{old}(r)}{(\beta_{i}-1)^2}  
\end{align}

To show that $q^{old}(r) \ge q^{new}(r)$, we calculate  $q^{new}(r) - q^{old}(r)$ as 
follows,
\begin{align}
& q^{new}(r) - q^{old}(r) = \sum_{u = 1}^S q_u^{new}(r) - \sum_{u = 1}^S 
q_u^{old}(r)  \nonumber \\
 &=\left(\sum_{u \ne s, b_{i,u} = 1}^S q_u^{new}(r) -  q_u^{old}(r)\right)  + \left(q_s^{new}(r) -  
q_s^{old}(r)\right)  \nonumber \\
 &= \left(\sum_{u \ne s, b_{i,u} = 1}^S - \frac{q_u^{old}(r)}{(\beta_{i}-1)^2}\right) + 
\frac{q_s^{old}(r)}{\beta_{i}-1}   \nonumber \\
 &= \left(\sum_{u \ne s, b_{i,u} = 1}^S  \frac{-q_s^{old}(r) + 
(q_s^{old}(r)-q_u^{old}(r))}{(\beta_{i}-1)^2}\right) +  \frac{q_s^{old}(r)}{\beta_{i}-1}  \nonumber \\
&= \left(\sum_{u \ne s, b_{i,u} = 1}^S \frac{-q_s^{old}(r)}{(\beta_{i}-1)^2}\right) \nonumber \\ 
& ~~~~~~~  + \left(\sum_{u 
\ne s, b_{i,u} = 1}^S \frac{(q_s^{old}(r)-q_u^{old}(r))}{(\beta_{i}-1)^2}\right) 	+ \frac{q_s^{old}(r)}{\beta_{i}-1}   \nonumber
\end{align}

Since the number of 1's in row i is $\beta_{i}$, the number of terms in the first summation of  
above equation would be exactly  $\beta_{i} - 1$. 
Therefore,  
\begin{align}
 q^{new}(r) - q^{old}(r) 	= \sum_{u \ne s, b_{i,u} = 1}^S \frac{q_s^{old}(r)-q_u^{old}(r)}{(\beta_{i}-1)^2} \le 0 \nonumber \\
    \because q_s(r) \le q_u(r), \forall u \ne s \nonumber
\end{align}
\end{proof}
\end{lemma}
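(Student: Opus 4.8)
The plan is to treat this as a direct computation of $q^{new}(r) - q^{old}(r)$ and to show it is non-positive, invoking the minimality of $q_s(r)$ only at the very last step. First I would specialize the expression for $q_s(r)$ in Eq.~(9) to the hypothesis $b_{1,u}=1$ for all $u$, which forces $\beta_{1}=S$ and hence $q_s(r) = \frac{1}{S}\prod_{j=2}^{S}\left(1 - \frac{b_{j,s}}{\beta_{j}}\right)$. The key structural observation is that flipping $b_{i,s}$ from $1$ to $0$ changes $q_u(r)$ for only two kinds of slots: the slot $s$ itself, and those other slots $u \ne s$ on which node $i$ was already active, i.e. $b_{i,u}=1$. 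Every slot with $b_{i,u}=0$ is untouched, since its node-$i$ factor equals $1$ both before and after the flip.

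Next I would compute these two effects in isolation. For slot $s$, removing node $i$ deletes the factor $\left(1 - \frac{1}{\beta_{i}}\right)$ from the product, so $q_s(r)$ strictly increases; a short manipulation yields the gain $q_s^{new}(r) - q_s^{old}(r) = \frac{q_s^{old}(r)}{\beta_{i}-1}$. For each remaining active slot $u$, node $i$ still contributes a factor, but because $\beta_{i}$ drops by one the factor tightens from $\left(1 - \frac{1}{\beta_{i}}\right)$ to $\left(1 - \frac{1}{\beta_{i}-1}\right)$; this makes $q_u(r)$ decrease, and the algebra gives the loss $q_u^{new}(r) - q_u^{old}(r) = -\frac{q_u^{old}(r)}{(\beta_{i}-1)^2}$. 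These are routine and I would state them rather than grind them out.

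The decisive step is the combination. Since row $i$ originally contained $\beta_{i}$ ones including the one at position $s$, there are exactly $\beta_{i}-1$ remaining active slots $u \ne s$. I would exploit this exact count by rewriting the single gain term as $\frac{q_s^{old}(r)}{\beta_{i}-1} = \frac{1}{(\beta_{i}-1)^2}\sum_{u\ne s,\, b_{i,u}=1} q_s^{old}(r)$, distributing it over precisely those $\beta_{i}-1$ slots, and folding it into the sum of losses to obtain
\[
q^{new}(r) - q^{old}(r) = \frac{1}{(\beta_{i}-1)^2}\sum_{u\ne s,\, b_{i,u}=1}\left(q_s^{old}(r) - q_u^{old}(r)\right).
\]
Only here do I invoke the hypothesis $q_s(r) \le q_u(r)$ for all $u \ne s$: every summand is non-positive, so the total difference is $\le 0$, which is the claim.

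I expect the main obstacle to be the bookkeeping in this combination rather than any individual factor manipulation. One must track that the decrease of $\beta_{i}$ raises node $i$'s per-slot selection probability on the other active slots (hence the extra contention and the loss there), and that the number of such slots is exactly $\beta_{i}-1$. It is precisely this matching of the lone positive term against $\beta_{i}-1$ negative terms that permits a termwise pairing, so the minimality of $q_s(r)$ can be applied slot-by-slot; if the count were off, the clean collapse into a single sum of differences would fail and the sign of the total would be unclear.
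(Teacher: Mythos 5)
Your proposal is correct and follows essentially the same route as the paper's proof: the same isolation of the two affected classes of slots, the same two per-slot difference formulas $q_s^{new}-q_s^{old}=\frac{q_s^{old}}{\beta_i-1}$ and $q_u^{new}-q_u^{old}=-\frac{q_u^{old}}{(\beta_i-1)^2}$, and the same distribution of the single gain over the exactly $\beta_i-1$ loss terms before invoking the minimality of $q_s(r)$. Nothing further is needed.
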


\begin{lemma}
  For a given instance of matrix B, let $b_{1,s} = 1, \forall s, 1 \le s \le S$ and  $\beta_{i} = 2, \forall i, 2 \le i \le 
S$.
  Then the following holds. For any two columns s and u and, for any row i, such that $\alpha_{s} > \alpha_{u}$,  $b_{i,s} = 1$  and 
$b_{i,u} = 0$, 
   $q(r)$ either reduces or remains the same if the value of $b_{i,s}$  and $b_{i,u}$ are interchanged.  
\end{lemma}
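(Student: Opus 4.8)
The plan is to exploit the two structural hypotheses to collapse the product formula for $q_s(r)$ into a simple closed form, after which the claimed inequality becomes an elementary comparison of two powers of $1/2$.

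First I would simplify $q_s(r)$ under the hypotheses, starting from Eq. (9). Since $b_{1,s}=1$ for every $s$, we have $\beta_1 = S$ and $b_{1,s}/\beta_1 = 1/S$, so the leading factor is the same constant $1/S$ for all columns. Since $\beta_j = 2$ for every $j$ with $2 \le j \le S$, each factor $1 - b_{j,s}/\beta_j$ equals $1$ when $b_{j,s}=0$ and $1/2$ when $b_{j,s}=1$. Hence $\prod_{j=2}^{S}\bigl(1 - b_{j,s}/\beta_j\bigr) = (1/2)^{\alpha_s}$, where $\alpha_s$ is precisely the number of $1$'s in column $s$ below the first row. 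This yields the clean identity $q_s(r) = \frac{1}{S\,2^{\alpha_s}}$ and therefore $q(r) = \frac{1}{S}\sum_{s=1}^{S} 2^{-\alpha_s}$.

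Next I would track the effect of the interchange. Because $b_{i,s}=1$, $b_{i,u}=0$ while $b_{1,t}=1$ for all $t$, the row $i$ cannot be the first row, so $i \ge 2$ and the swap genuinely alters the $\alpha$ counts. Interchanging $b_{i,s}$ and $b_{i,u}$ leaves $\beta_i = 2$ unchanged (a $1$ and a $0$ in the same row merely trade places) and changes only two column sums, $\alpha_s \mapsto \alpha_s - 1$ and $\alpha_u \mapsto \alpha_u + 1$, with every other $\alpha_t$ untouched. Substituting into the closed form, only the $s$ and $u$ terms of $q(r)$ move, and a short computation gives
\[
q^{new}(r) - q^{old}(r) = \frac{1}{S}\bigl(2^{-\alpha_s} - 2^{-(\alpha_u+1)}\bigr).
\]

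Finally I would close the argument using integrality. The right-hand side is nonpositive exactly when $2^{-\alpha_s} \le 2^{-(\alpha_u+1)}$, i.e.\ when $\alpha_s \ge \alpha_u + 1$; since $\alpha_s$ and $\alpha_u$ are integers, the hypothesis $\alpha_s > \alpha_u$ is equivalent to $\alpha_s \ge \alpha_u + 1$. Hence $q^{new}(r) - q^{old}(r) \le 0$, with equality precisely when $\alpha_s = \alpha_u + 1$, which proves that $q(r)$ either decreases or stays the same under the interchange. I do not expect a substantive obstacle here: the whole difficulty is the opening observation that the hypotheses reduce $q_s(r)$ to $\frac{1}{S}2^{-\alpha_s}$. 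The only points requiring care are confirming that the hypotheses force $i \ge 2$ and that the swap preserves $\beta_i = 2$, after which the claim is a one-line comparison of exponents.
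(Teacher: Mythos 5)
Your proposal is correct and takes essentially the same route as the paper's proof: both reduce $q_s(r)$ to $\frac{1}{S}2^{-\alpha_s}$ using $\beta_1=S$ and $\beta_j=2$, observe that the swap only shifts $\alpha_s\mapsto\alpha_s-1$ and $\alpha_u\mapsto\alpha_u+1$, and conclude $q^{new}(r)-q^{old}(r)=\frac{1}{S}\bigl(2^{-\alpha_s}-2^{-(\alpha_u+1)}\bigr)\le 0$ from $\alpha_s\ge\alpha_u+1$. You merely make explicit two details the paper leaves implicit (the derivation of the closed form from Eq.~(9) and the fact that $i\ge 2$), which is fine.
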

\begin{proof}
Consider $q^{old}(r)$, $q^{new}(r)$, $q_s^{old}(r)$ and $q_s^{new}(r)$ as defined in lemma 2. We need to show that 
$q^{old}(r) \ge q^{new}(r)$. Here, $q_s^{old}(r) = \frac{1}{S} * 1/2^{\alpha_s}, q_u^{old}(r) = \frac{1}{S}  * 
1/2^{\alpha_u}, q_s^{new}(r) = \frac{1}{S}* 1/2^{\alpha_{s}-1}, q_u^{new}(r) = \frac{1}{S}*  1/2^{\alpha_{u}+1}$. 
Since columns other than $u$ and $s$ remain the same, we have
\begin{align}
 &q^{new}(r) - q^{old}(r) = (q_s^{new}(r) + q_u^{new}(r)) - (q_s^{old}(r) + q_u^{old}(r)) \nonumber \\
 &=  (q_s^{new}(r) - q_s^{old}(r)) + (q_u^{new}(r) - q_u^{old}(r) \nonumber \\
 &= \left(\frac{1}{2^{\alpha_s-1}} -  \frac{1}{2^{\alpha_s}}\right) + \left(\frac{1}{2^{\alpha_u+1}} -  \frac{1}{2^{\alpha_u}}\right)  
\nonumber \\
 & =  \frac{1}{2^{\alpha_s}} - \frac{1}{2^{\alpha_u+1}}  \le 0 \;\;\;\;\;\;\;\;\;\; \because \alpha_{s} > \alpha_{u} \nonumber  
 \end{align}
\end{proof}

Now we  prove that  $B^{min}$ would satisfy a few constraints,
in terms of $\alpha_u$ and $\beta_i$
, with the help of lemma 3 and 4.

\begin{theorem}
$B^{min}$ has the following properties.
\begin{enumerate}
 \item $\beta_{i} = 2, ~ \forall i \text{  in the range  } [2, S] $
 \item For exactly two columns s1 and s2, $\alpha_{s1} = \alpha_{s2} = 1$ and for all other columns $u \ne s1,s2$, $\alpha_{u} = 2$.
\end{enumerate}

\begin{proof}
We prove both the properties for two different cases:  $\beta_1 = S$ and $\beta_1 \ne S$

\noindent{\textbf{Case 1. $\beta_1 = S$:}} 
The property (1) can be proved by contradiction. First, we show that \textbf{$\beta_{i} \ge 2$, for $2 \le i \le S$}. If  $\beta_i = 1$
with $b_{i,s} = 1$, for some row i, then node i is in SS state. Therefore, node 1 should have stopped transmitting in slot $s$, i.e.  
$b_{1,s} = 0$, which contradicts our assumption that $\beta_1 = S$. Now, we show that  $\beta_{i} = 2$, for $2 \le i \le S$.
Let $\exists i: \beta_{i} > 2$ and $\mathcal{A}$ be the set of 
column indexes u for which $b_{i,u} = 1$, then $\exists s \in$ $\mathcal{A}$,  such that $q_s(r) \le q_u(r), \forall u \in A$.
Therefore, by the virtue of lemma 1, $q(r)$ reduces or remains same, if $b_{i,s}$ is changed from 1 to 0. 
The same process can be repeated till $\beta_{i} = 2$. 
 
The property (2) can also be proved by contradiction. We know that $\beta_{1} = S$ and  $\beta_{i} = 2$, for $2 \le i \le S$. 
Therefore, $\sum_{s=1}^{S} \alpha_s = 2(S-1)$.  
First, we show that $\alpha_{s} \le 2$, for $1 \le s \le S$. 
For a column s, $\alpha_{s} > 2 \implies \exists u: \alpha_{u} < 2$, otherwise  $\sum_{s=1}^{S} \alpha_s$ would become less than 
$2(S-1)$.
In this case, for any row i, such that $b_{i,s} = 1$ and $b_{i,u} = 0$, can be interchanged, by virtue of lemma 3. This proves that 
$\alpha_{s}$ could be either  0,1 or 2,  for 1 $\le s\le S$. 
Since, any column can have at most two, 1's, this implies that at most one column of type $\alpha_{s} = 0$ can exist and that also can be changed to a column with $\alpha_s=1$,  by virtue of lemma 3.
Furthermore, the number of columns of type  $\alpha_{s} = 1$ cannot be one, since 
$2(S-1)$ is  even. Finally, we can say that number of columns of type  $\alpha_{s} = 1$ is exactly 2; otherwise,
the total sum will be less than $2(S-1)$.

\noindent \textbf{Case 2. $\beta_1 \ne S$:}
Let $q_{case 1}(r)$ and $q_{case 2}(r)$  be the corresponding summation for case 1 and case2, respectively. 
The value of  $q_{case 1}(r)$ would  be $\frac{S+2}{4S}$ using equation (7) and (9). We will prove that $q_{case 1}(r) < q_{case 2}(r)$ 
by 
showing
that, any perturbation in the  matrix corresponding to case 1, will increase the value of $q(r)$.
We have already proved, in case 1, that any modification in any of the row from 2 to row $S$ and leaving row 1 unchanged, will 
increase 
$q(r)$.
Now, let us change a single entry 
$b_{1,u} =$ 1 to 0, i.e., node 1 has decided not to transmit in slot u. This only happens when at least one adjacent  node i 
in $N2_1$ has already entered  SS state for slot u,  which implies that $b_{i,u} = 1$ and $b_{i,s} = 0, \forall s \ne u$. 
Let us interchange the row i with row $S$ and column $u$ with column $S$. In this case,  
$b_{1,S} =  0$, $b_{1,s} = 1$, $\forall s\ne S$,
$b_{S,S} =  1$
and $b_{S,s} = 0, \forall s\ne S$. Consider the sub matrix of size $S-1$ times $S-1$. 
The minimum value of $q(r)$ which can be achieved by this sub matrix would be $\frac{S+1}{4(S-1)}$. 
Moreover, $q_s(r) = 0$, because $b_{1,S} =  0$. Therefore, $q_{case2}(r)$ = $\frac{S+1}{4(S-1)}
> \frac{S+2}{4S} = q_{case1}(r)$.
\end{proof}
\end{theorem}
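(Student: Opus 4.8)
The plan is to pin down $B^{min}$ in two structural stages --- first the rows, then the columns --- while working under the hypothesis $\beta_1=S$ (node $1$ still contends in every slot), and then to justify that hypothesis separately. The engine throughout is that the two preceding lemmas each describe a local modification of $B$ that never increases $q(r)$; since $B^{min}$ minimizes $q(r)$, it must be invariant under every such move, so I can read off its structure by asking which matrices admit no further strictly-decreasing move.

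Under $\beta_1=S$, I would first establish property (1). No row can have $\beta_i=1$: a row with a single $1$ at column $s$ means node $i$ is in SS state on slot $s$, which forces $b_{1,s}=0$ and contradicts $\beta_1=S$; hence $\beta_i\ge2$ for all $i\in[2,S]$. To collapse every $\beta_i$ to exactly $2$, whenever $\beta_i>2$ I would apply the flip lemma: among the columns $u$ with $b_{i,u}=1$ choose one minimizing $q_u(r)$ and set that entry to $0$; by the lemma $q(r)$ does not increase while $\beta_i$ drops by one. Iterating yields $\beta_i=2$ for every $i\in[2,S]$.

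This row structure yields the count $\sum_s\alpha_s=2(S-1)$, and with every $\beta_j=2$ each surviving factor of $q_s(r)=\tfrac1S\prod_{j\ge2}(1-\tfrac{b_{j,s}}{\beta_j})$ equals $1/2$, so $q_s(r)=\tfrac1S 2^{-\alpha_s}$ and $q(r)=\tfrac1S\sum_s 2^{-\alpha_s}$. Since $2^{-\alpha}$ is decreasing and convex, minimizing this sum under the fixed total rewards balancing the $\alpha_s$, which is exactly what the interchange lemma delivers: if some $\alpha_s\ge3$ then the fixed total forces some $\alpha_u\le1$, a row $i$ with $b_{i,s}=1$ and $b_{i,u}=0$ must exist, and swapping these entries does not increase $q(r)$ while shrinking the imbalance. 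Iterating confines every $\alpha_s$ to $\{0,1,2\}$; a column with $\alpha_s=0$ can similarly be promoted at the expense of one with $\alpha_s=2$; and because $2(S-1)$ is even, the surviving balanced profile has exactly two columns with $\alpha_s=1$ and all the rest with $\alpha_s=2$, which is property (2).

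Finally I would discharge the hypothesis $\beta_1=S$ by showing it strictly beats any configuration with $\beta_1\ne S$, and this comparison is the step I expect to be the main obstacle, being the only place where $\beta_1=S$ is derived rather than assumed. Evaluating the profile from (1)--(2) gives $q_{\mathrm{case}\,1}(r)=\frac{S-2}{4S}+\frac{2}{2S}=\frac{S+2}{4S}$. When $\beta_1\ne S$, flipping one entry $b_{1,u}$ from $1$ to $0$ models a two-hop neighbour $i$ entering SS state on slot $u$ (so $b_{i,u}=1$ and $b_{i,s}=0$ for $s\ne u$); relabelling that row and column to index $S$ makes $q_S(r)=0$ while the remaining $(S-1)\times(S-1)$ block is an instance of the same problem, minimized at $\frac{S+1}{4(S-1)}$. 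The delicate part is getting this recursion exactly right --- verifying that zeroing out one row and one column genuinely reduces to the next-smaller instance --- after which the clinching inequality $\frac{S+1}{4(S-1)}>\frac{S+2}{4S}$ (equivalently $S(S+1)>(S+2)(S-1)$, i.e.\ $0>-2$) shows the case-$1$ configuration is strictly smaller and hence optimal. By contrast, properties (1) and (2) fall out comparatively mechanically once the two non-increasing moves are in hand.
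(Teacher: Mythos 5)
Your proposal is correct and follows essentially the same route as the paper: the same two lemma-driven reduction steps (row collapse to $\beta_i=2$, then column balancing by parity) under the hypothesis $\beta_1=S$, and the same recursive comparison $\frac{S+1}{4(S-1)}>\frac{S+2}{4S}$ to dispose of the case $\beta_1\ne S$. The explicit formula $q_s(r)=\tfrac1S 2^{-\alpha_s}$ and the convexity framing are a nice clarification, but the argument is structurally identical to the paper's.
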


From Eq. 7, we know that,  $q^{min}$ can be  achieved when $q(r)$ is minimum and $B^{min}$ should satisfy the properties as given 
in theorem 7. 
Therefore,

\begin{align}
 q^{min} = 1 - \left(\frac{4*S-1}{4*S}\right)^{(S-2)}*\left(\frac{2*S-1}{2*S}\right)^{2}
\end{align}

%
%

\begin{theorem}
   The value of $q^{min}$ converges to  $1 - e^{\frac{-1}{4}} = 0.221$, as $S \to \infty$. 
\begin{proof}
 \begin{align}
   &\displaystyle\lim_{S \to \infty}q^{min} = \nonumber \\
   &=   \displaystyle\lim_{S \to \infty} \left(1 - 
\left(\frac{4*S-1}{4*S}\right)^{(S-2)}*\left(\frac{2*S-1}{2*S}\right)^{2}\right)	\nonumber \\
&=  \displaystyle\lim_{S \to \infty} \left(1 - 
\left(1 +  \frac{-1}{4*S}\right)^{S}\right) =   1 - e^{\frac{-1}{4}} = 0.221  \nonumber 
\end{align}
\end{proof}
\end{theorem}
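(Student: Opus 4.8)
The plan is to treat this as a direct limit computation, starting from the closed-form expression for $q^{min}$ established above (Eq.~(16)), namely $q^{min} = 1 - \left(\frac{4S-1}{4S}\right)^{S-2}\left(\frac{2S-1}{2S}\right)^{2}$. Since subtraction from the constant $1$ is continuous, it suffices to evaluate $L = \lim_{S\to\infty}\left(\frac{4S-1}{4S}\right)^{S-2}\left(\frac{2S-1}{2S}\right)^{2}$; the desired limit of $q^{min}$ is then just $1 - L$.

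First I would rewrite each base as a perturbation of unity, using $\frac{4S-1}{4S} = 1 - \frac{1}{4S}$ and $\frac{2S-1}{2S} = 1 - \frac{1}{2S}$, so the product inside $L$ becomes $\left(1 - \frac{1}{4S}\right)^{S-2}\left(1 - \frac{1}{2S}\right)^{2}$. The second factor is disposed of immediately: its base tends to $1$ while its exponent is fixed at $2$, hence $\left(1 - \frac{1}{2S}\right)^{2} \to 1$. The whole computation therefore reduces to the behaviour of the first factor $\left(1 - \frac{1}{4S}\right)^{S-2}$.

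For that factor I would invoke the standard limit $\lim_{n\to\infty}\left(1 + \frac{x}{n}\right)^{n} = e^{x}$ with $x = -\frac{1}{4}$ and $n = S$. To bring the expression into exactly that shape, I would split the exponent and write $\left(1 - \frac{1}{4S}\right)^{S-2} = \left(1 - \frac{1}{4S}\right)^{S}\left(1 - \frac{1}{4S}\right)^{-2}$. The first piece converges to $e^{-1/4}$ by the cited limit, and the correction $\left(1 - \frac{1}{4S}\right)^{-2} \to 1$ because its base tends to $1$ while the exponent stays fixed. Multiplying the convergent factors (permissible since each limit exists) gives $L = e^{-1/4}\cdot 1 \cdot 1 = e^{-1/4}$, whence $\lim_{S\to\infty} q^{min} = 1 - e^{-1/4} \approx 0.221$.

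Since the whole argument rests on the product rule for limits together with the defining limit of $e$, there is no deep obstacle here. The only point demanding a little care is the shifted exponent $S-2$: one must check that replacing it by $S$ leaves the limit unchanged, and this is precisely what the $\left(1 - \frac{1}{4S}\right)^{-2}\to 1$ correction certifies. With that handled, the computation closes.
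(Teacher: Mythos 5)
Your proposal is correct and follows essentially the same route as the paper: a direct limit computation reducing $\left(\frac{4S-1}{4S}\right)^{S-2}\left(\frac{2S-1}{2S}\right)^{2}$ to the standard limit $\left(1-\frac{1}{4S}\right)^{S}\to e^{-1/4}$. You merely make explicit the bookkeeping for the fixed-exponent factors $\left(1-\frac{1}{2S}\right)^{2}\to 1$ and $\left(1-\frac{1}{4S}\right)^{-2}\to 1$, which the paper silently absorbs in its second equality.
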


Substituting $E[R_i]$ with $\frac{1}{q^{min}}$ in Eq. 3, from theorem 6  we get

\abovedisplayskip=2pt
\belowdisplayskip=2pt
\begin{align}
 E[t_i^{SS}] \le  \frac{(t^{CS} + t^{req} + S)}{\left( 1 - 
\left(\frac{4*S-1}{4*S}\right)^{(S-2)}*\left(\frac{2*S-1}{2*S}\right)^{2} \right)}
\end{align}

Finally,  the value of E[T], i.e., the expected time taken by RD-TDMA algorithm to perform scheduling can be calculated as
\abovedisplayskip=2pt
\belowdisplayskip=2pt
\begin{align}
 E[T] &= E\left[max\left(t_1^{SS}, t_2^{SS} \dots t_n^{SS}\right)\right] \nonumber \\
 &=  (t^{CS} + t^{req} + S)* E\left[max\left(R_1, R_2 \dots R_n\right)\right] 
\end{align}
The $R_i$s can be assumed as i.i.d (independent and identically distributed) geometric random variable with parameter $q^{min}$. In this case, the $E[t_1^{SS}]$ would be higher than the actual expected time to enter SS state, by node i. 
Let $R = max\left(R_1, R_2 \dots R_n\right)$. The value of $E[R]$ can be calculated as, 

\vspace{-5pt}

\begin{center}
\noindent$
\begin{array}{r@{}l}

    E[R] &= \sum_{r \ge 0} P\left([R > r]\right)   \vspace{3pt} = \sum_{r\ge0} (1 - P(R \le r)) \vspace{3pt} \\
   & = \sum_{r\ge0} (1 - P(R_i \le r)^n)  \vspace{3pt}   =  \sum_{r\ge0} \left(1 - (1 - \mu^r)^n\right) 
   \end{array}
$
\end{center}

, where $\mu = 1 - q^{min}$. By considering the above infinite sum as right and left hand Riemann sum approximations of the corresponding integral, we obtain, 
\abovedisplayskip=2pt
\belowdisplayskip=2pt
\begin{align}
 \int_0^\infty  \left(1 - (1 - \mu^r)^n\right) du \le E[R] \le 1 +  \int_0^\infty \left(1 - (1 - \mu^r)^n\right) du
\end{align}

With the change of variable $w = 1 - \mu^r$,  we have,
\begin{align}
  E[R] &\le 1 +\frac{1}{log \: \mu}  \int_0^1 \frac{1-w^n}{1-w} dw \nonumber \\
 &= 1 + \frac{1}{log \:\mu }  \int_0^1 (1 + w + ... + w^{n-1}) dw \nonumber \\
 &= 1 + \frac{1}{log \:\mu } (1 + \frac{1}{2} + ... + \frac{1}{n})  \approx 1 - \frac{log \: n}{log \: \mu} 
\end{align}

We know from Eq. 16, that the value of $q^{min}$ depends only upon $S$, 
which is also a measure of neighborhood density of  multihop WSNs, therefore for constant $n$,  
the $E[R]$ (expectation of maximum number of rounds taken by any node in the network to reach SS state) is of the order of $O(log \: S)$.  A more 
rigorous analysis on expectation of the maximum of IID geometric random variables can be found in \cite{MAX_IID}. 
After substituting the value of E[R] from Eq 20 in Eq. 18,  we get $E[T] =  \left(t^{CS} + t^{req} + S\right) * \left(1 - \frac{log \: 
n}{log \: \mu}\right)$, and it is of the order of   $O(S \: log \: S)$, assuming $n$, $t^{CS}$ and $t^{req}$ as constants. 


%
%

\subsection{Runtime of DSLR Algorithm}
Let $X_{n}$ be the number of moves made by a node to reach its final  slot, during the execution of DSLR algorithm,  
where $n$ is the initial slot Id of the node. Clearly $X_{n}$ is a random variable which can take the values ranging from $1$ to $n$, and its value 
depends upon the initial TDMA-Schedule provided to the DSLR algorithm, sender-receiver relationship and network topology. We assume that the 
probability of a slot being free for a node is uniformly distributed. We model the behavior of a particular node using a 
discrete time markov chain (DTMC),
with the current slot Id of the node as  state of  the markov chain. The transition probabilities, $\pi_{i,j}$, are defined as follows.

\vspace{-10pt}

\[ \pi_{i,j} =
\left\{
\begin{array}{ll}
\frac{1}{i-1}, & \mbox{j $<$ i} \\
0,   & \mbox{otherwise}
\end{array}
\right. 
\]

, where $i$ is the current slot Id and $j$ is the slot Id to which the transition will ocuur.   
We can derive equations for expected number of rounds required to reach absorbing state (final slot) starting from state $n$, i.e.,  
$E[X_{n}]$, by using the total expectation theorem. Time  to reach absorbing state  starting from  transient state
$n$ is equal to 1 plus the expected time to reach absorption state starting from the next transient state $l$ with probability $\pi_{n,l}$. 
This leads to a system of linear equations which is stated below.

\vspace{-10pt}

\begin{align}
 E[X_{n}] =  0, \text{~~~~~~~~~~~~~~~~~~~~~~~~~~~~~~~~~for absorbing state} \nonumber
\end{align}

\vspace{-10pt}

\begin{align}
 E[X_{n}] =  1 +  \frac{1}{n-1} \displaystyle \sum_{l=1}^{n-1} E[X_{l}], \text{~~~~for non-absorbing states}  \nonumber 
\end{align}

The starting state and the absorbing state for different nodes can be different. The expected time to reach the absorbing state would be 
maximum if starting state is $S$ (initial slot Id of the node is $S$) and the absorbing state is $1$ (final slot Id of the node is $1$). 
 To get the upper bound on expected time, we  solve the above equation for  $E[X_{S}]$ considering absorbing state as $1$, and get,

\vspace{-10pt}

\begin{align}
 E[X_{S}] =  \displaystyle \sum_{l=1}^{n-1} \frac{1}{l} \nonumber 
\end{align}

The above equation is a  harmonic series which can be approximated as $\log n$, for large values of $n$.
Therefore, 

\vspace{-5pt}

\begin{align}
 E[X_{S}] \approx \log n
\end{align}

Let $R$ be the number of rounds spent by a node in a particular slot before moving to its first-free slot.
The value of $R$ depends upon following two properties. 

\vspace{5pt}

\noindent 1) Due to loss of HELLO messages in different frames of a round, a node $i$ cannot move to its first-free slot if 
 $FF_i < slot_i$. In this case, nodes have to wait till they receive all the  HELLO messages  transmitted by their neighbors,
 in a round.
 
Let $p$ be	 the probability that node i will not not 
receive HELLO message from a node in round $R_i$. Then, the probability, P, that node $i$ will receive HELLO messages 
transmitted by all the neighboring nodes in a round, to get the
exact status of slots, can be given as,  
 $P = (1 - p)^{S*4}$.
Finally, the expected number of rounds  that a node has to wait  in a particular slot before moving to its first-free slot, due to loss of 
HELLO message, is $\frac{1}{P}$. 

\vspace{5pt}

\noindent 2) A slot may become free for a node $i$, only after the movement of some other node in its neighborhood. This can be seen as the ``ripple 
effect'', where movement of some node at distant $k$ hop from node $i$ may create a free-slot for the node at distant $k-1$ hop. This 
process continues and eventually  creates a free-slot at node $i$.


By combining the effect of properties (1) and (2), as given above, the upper bound on the expectation of  R  can be calculated as,

\vspace{-13pt}

\begin{align}
E[R] = \frac{1}{P} * D 
\end{align}

\noindent where D is the maximum distance between any two nodes in the graph, in terms of number of hops. 

Finally, the upper bound on the expected runtime of DSLR algorithm, i.e., the  expected number of rounds taken by a node to reach its 
final slot can be calculate by multiplying expected number of moves (Eq. 21) with the expected number of rounds  spent in a slot before moving to 
another slot (Eq. 22). 

\vspace{-13pt}

\begin{align}
T_{DSLR} = \frac{\log n * D}{ (1 - p)^{S*4}} 
\end{align}

\section{SIMULATION RESULTS}
  We have used Castalia  Simulator \cite{WSN_S_0001} to study the performance of proposed algorithm 
in terms of   time and message complexity.   The lognormal shadowing channel model is used to get the accurate estimates for average 
path loss at different receivers. The various values of path loss exponent and  Gaussian zero-mean random-variable are used  to 
experiment with different packet error rate (PER).
    The simulations have been performed for different data rates to see the effect of data 
transmission rate on various performance metrics.
   All nodes are distributed randomly within 250mX250m area. 
  The neighborhood size of the network is changed by varying the number of nodes from 50 to 300. 
  This setup produces topologies with different neighborhood density, varying between 5 and 50.

Figure \ref{fig_MH_latency} shows the runtime of RD-TDMA algorithm with 
respect to $S$ (a measure of two-hop network density), for different data rates.
The runtime increases with respect to the size of the network. However,  runtime increases 
rapidly for larger network density, because of increase in the  
number of message exchanges, which, in turn, leads to higher rate of collision. 
Furthermore, the higher data rates  achieve
less runtime for a particular  network density. This is due to fact that fixed size message can 
be transmitted in lesser time 
with higher data rates. For higher data rates, the number of collisions are less,
and therefore, the runtime is further reduced.

Figure \ref{fig_MH_overhead} shows the average number of messages transmitted by a node during the 
execution of RD-TDMA, to get scheduled, with respect to $S$, for different data rates. 
The message overhead increases  with respect to the size of the network. 
 Similar to the runtime, the increase in message overhead is also not linear. 
The variation in data rate 
should only affect the transmission time of a message and not the number of messages transmitted by 
node. But, the message overhead is less for higher data rates due to lesser number of collisions.   
The important point to be noted here is that
the runtime and message overhead only depends on two-hop network density, instead of on the number of nodes in 
the network.

\begin{figure}[t]
\begin{center}
\includegraphics[scale=.6]{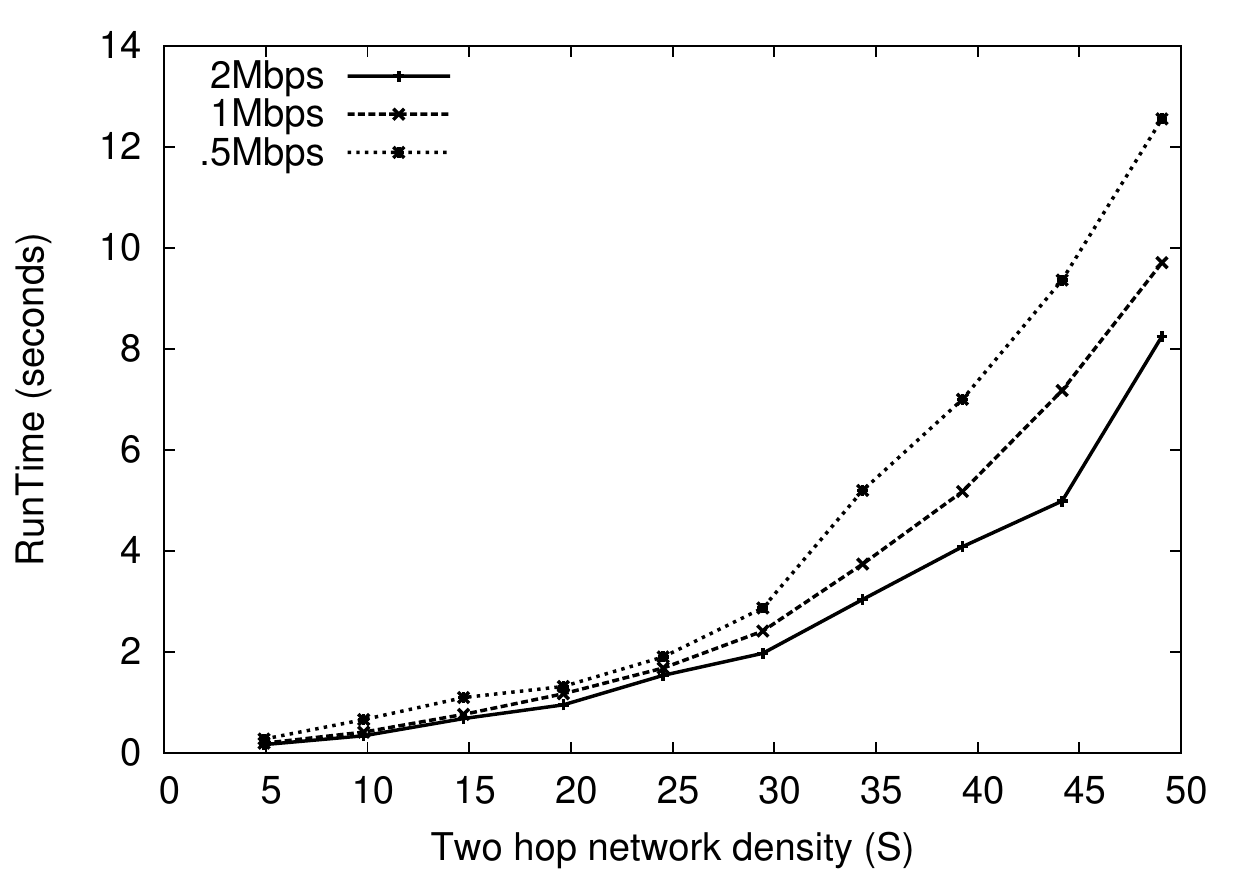}
\label{fig_MH_latency}
\end{center}
\caption{Runtime performance of RD-TDMA for a  multi-hop WSN}
 \end{figure}

\begin{figure}[t]
\begin{center}
\includegraphics[scale=.6]{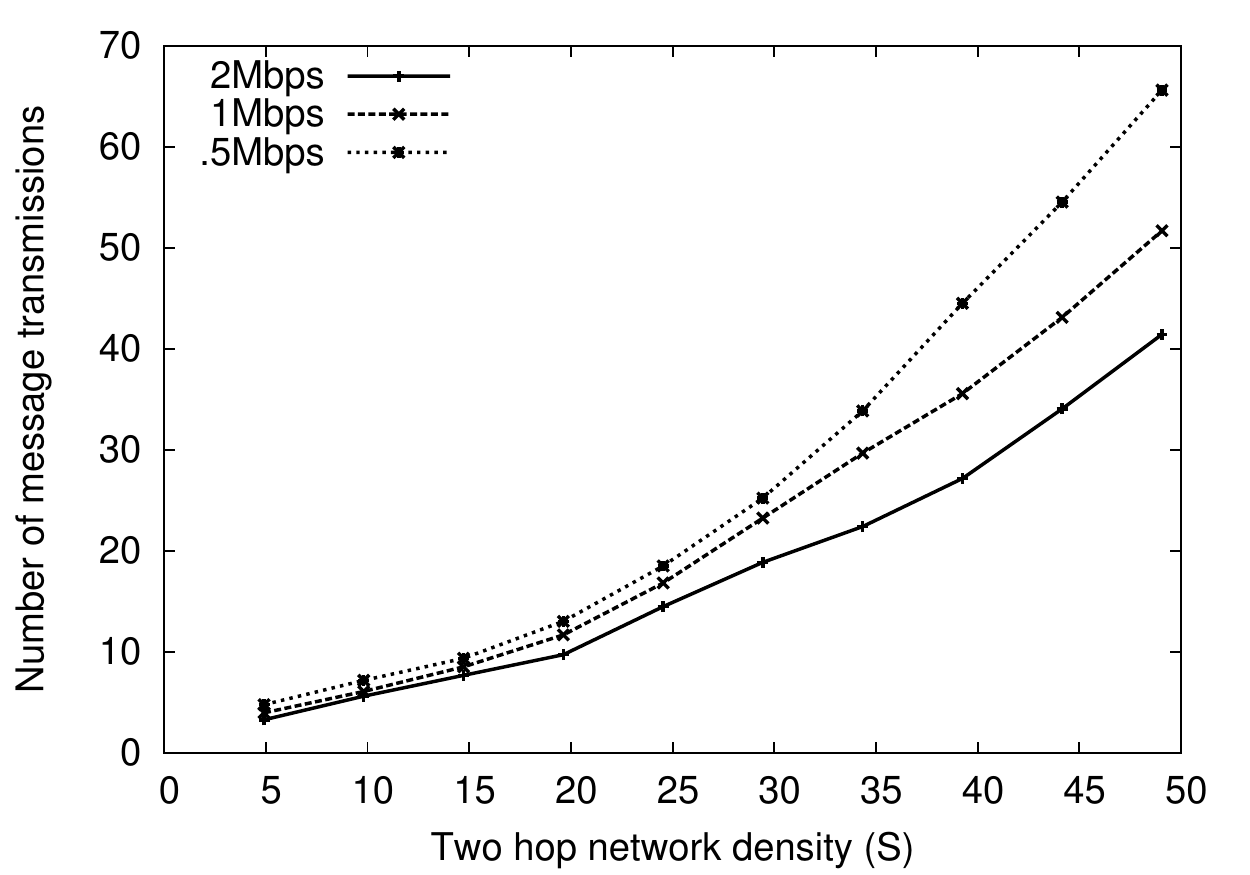}
\caption{Message overhead of RD-TDMA for a  multi-hop WSN}
\label{fig_MH_overhead}
\end{center}
 \end{figure}

\begin{figure}[t]
\begin{center}
\includegraphics[scale=.6]{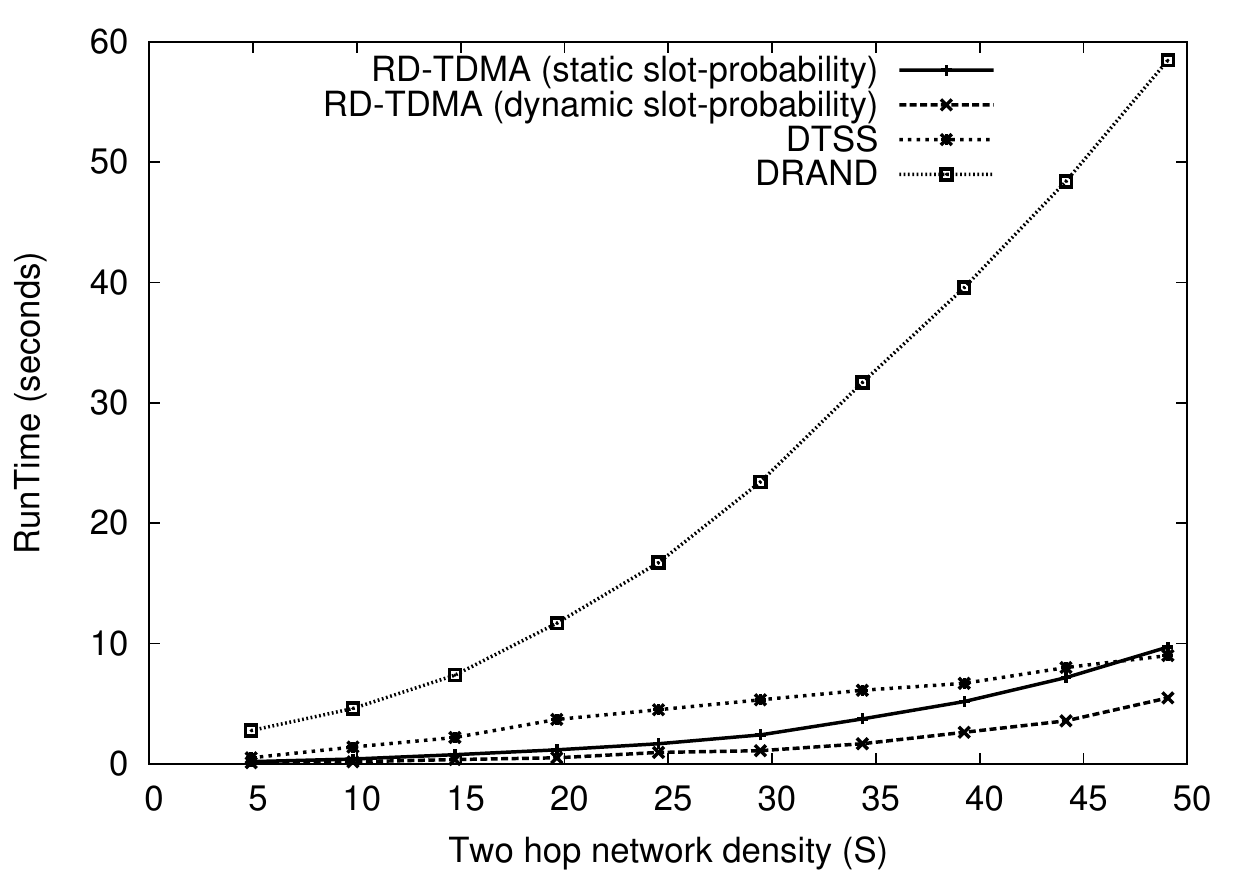}
\caption{Performance improvement with dynamic-slot-probability}
\label{fig_MH_latency_comparision}
\end{center}
 \end{figure}

%

We now analyze the performance improvement due to updating of  slot-probability dynamically,  and also 
compare the performance of RD-TDMA algorithm with that 
of DRAND \cite{DRAND} and 
DTSS \cite{DTSS}
algorithms. 
Figure \ref{fig_MH_latency_comparision} shows the runtime performance of RD-TDMA algorithm for 
static and dynamic slot-probability  along 
with DRAND and DTSS, for a multihop network. The primary reason of getting less runtime in RD-TDMA 
and DTSS is because  both algorithms use probabilistic a approach to  
generate a feasible schedule, which is not necessarily optimal, whereas DRAND algorithm tries to generate a sub-optimal feasible  schedule 
by using greedy approach, which is inherently sequential. 
Finally, we can see that dynamic update of  slot probability, reduces 
runtime to almost 50\% as compared to static probability assignment.

%
%
%

\begin{figure}[t]
\begin{center}
\includegraphics[scale=.6]{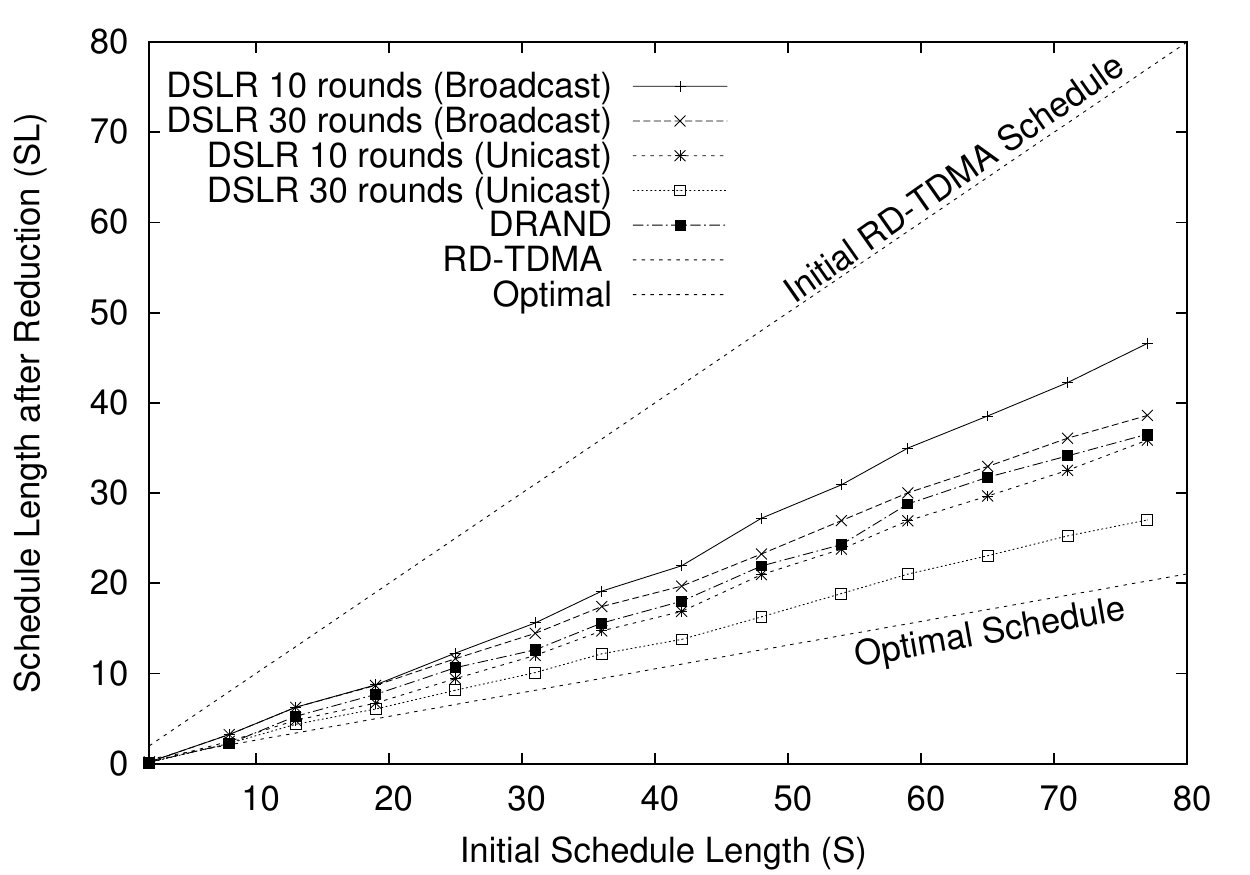}
 \caption{Performance of DSLR algorithm in terms of generated schedule length, with respect to initial schedule, for different
rounds of execution. }
 \label{fig2}
\end{center}
 \end{figure}

\begin{figure}[t]
\begin{center}
\includegraphics[scale=.6]{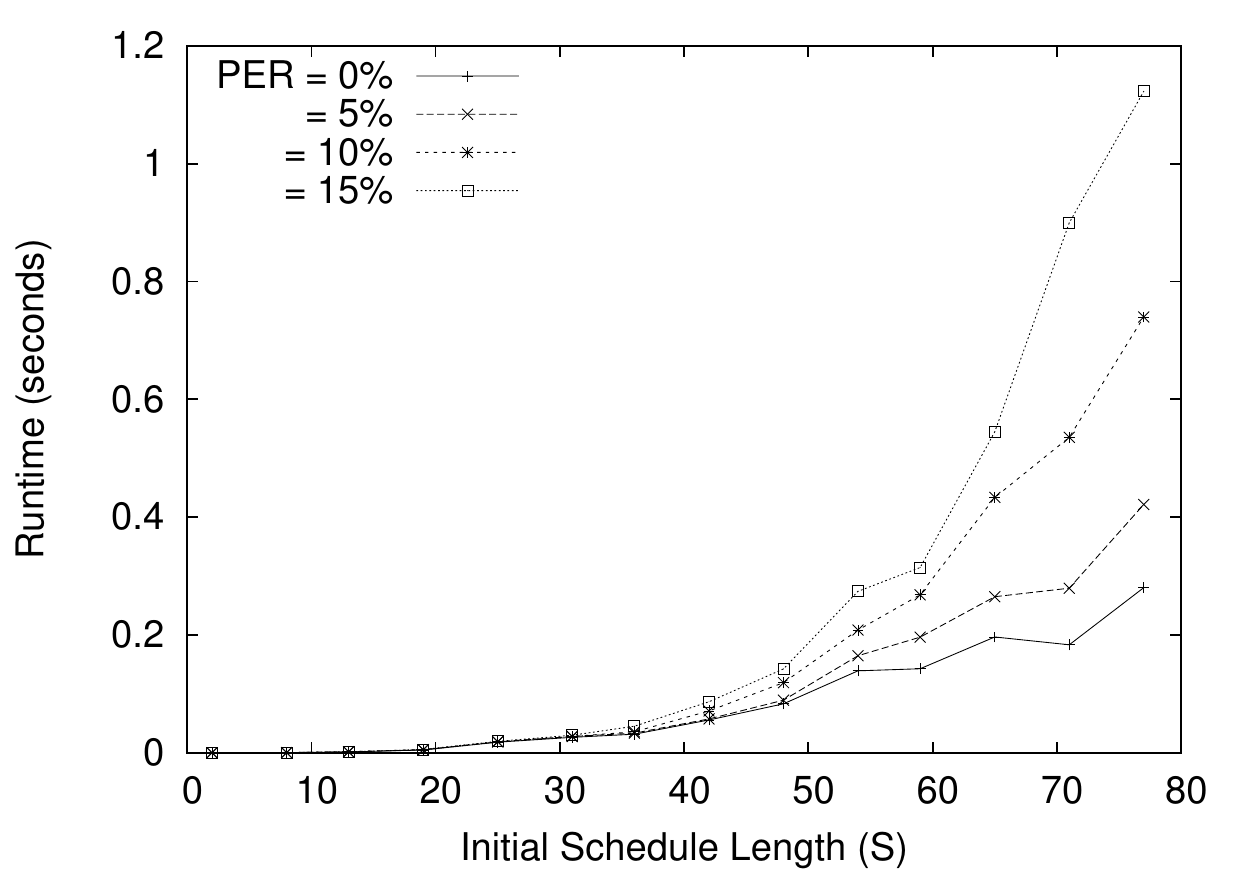}
 \caption{Performance of DSLR algorithm in terms of runtime with respect to initial schedule length, for different PER values.}
 \label{fig3}
\end{center}
 \end{figure}

\begin{figure}[!h]
\begin{center}
\includegraphics[scale=.6]{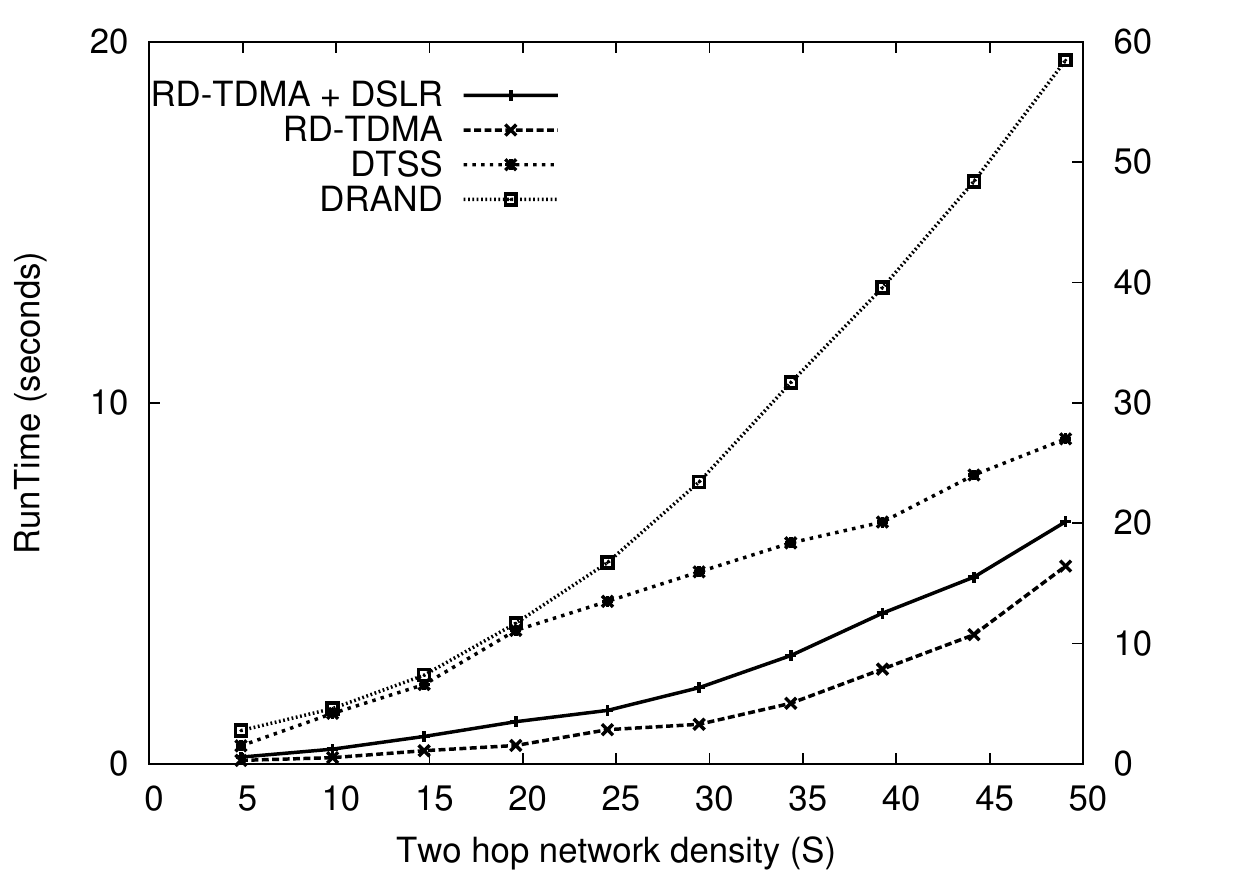}
 \caption{Comparison of DSLR algorithm with existing  TDMA scheduling algorithms. The right side scale $(0-60)$ is for DRAND algorithm.}
 \label{fig4}
\end{center}
 \end{figure}

\begin{figure}[!h]
 \centering
 \includegraphics[scale=.6]{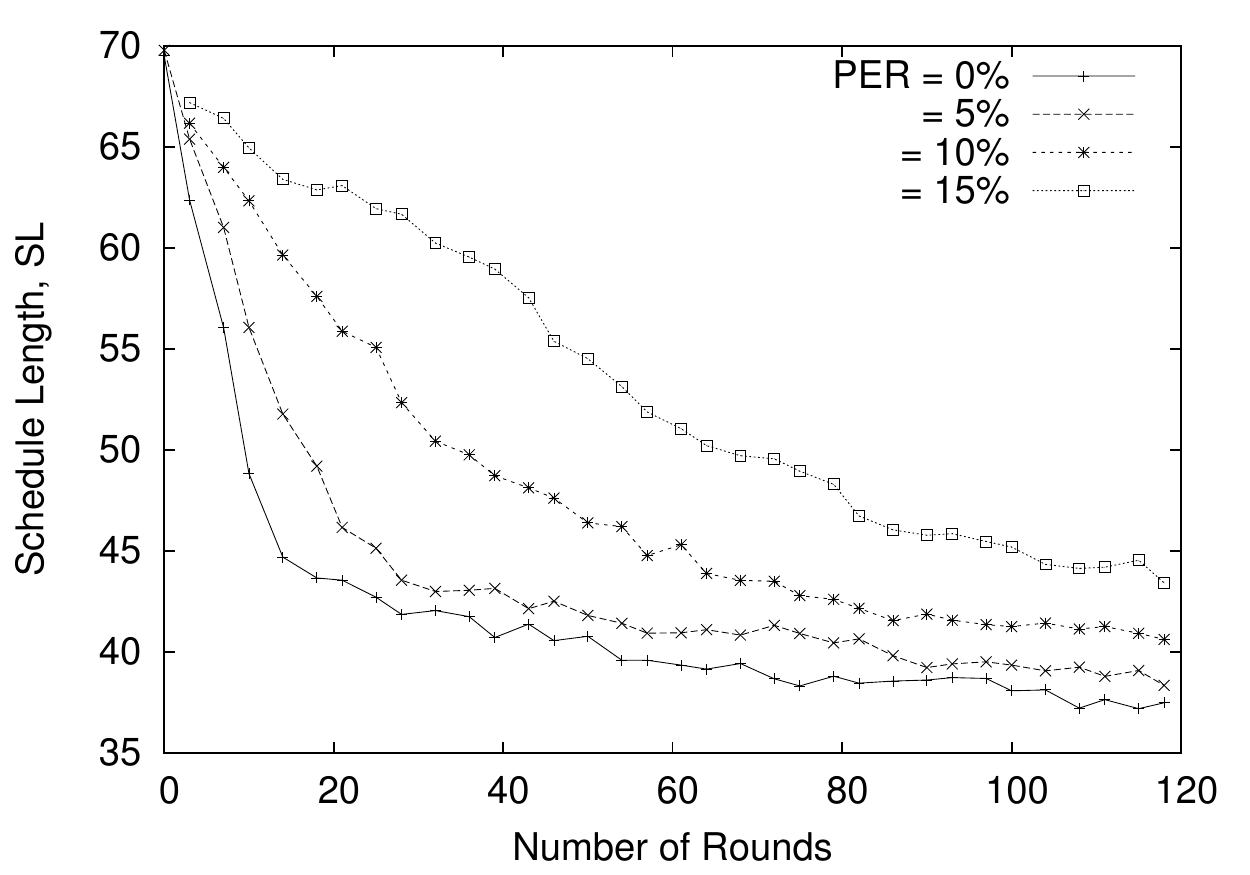}
 \caption{Performance of DSLR algorithm to trade-off runtime
(number of rounds) with schedule length in the presence of
transmission failures.
}
 \label{fig5}
\end{figure}

%
%
%

We now analyze the performance DSLR algorithm in terms of  schedule length achieved after compaction,
time required to perform the compaction, and its capability to trade-off the runtime with generated schedule length.

\noindent \textbf{Schedule length after compaction:} Fig. \ref{fig2} shows the degree
of compaction the DSLR algorithm is able to achieve in 
terms of  schedule length $SL$ after reduction,  with respect to initial schedule length $S$, for broadcast as well unicast
mode of communications, after running the algorithm for
different number of rounds. Fig.  \ref{fig2} also shows the performance
comparison of DSLR algorithm with DRAND algorithm \cite{DRAND},
which uses the greedy graph colouring approach to perform the
TDMA-scheduling. The graph labeled ''Optimal`` represents
the lower bound on schedule length that any optimal algorithm
based on two-hop interference model can achieve. 

Fig. \ref{fig2} shows almost a linear relationship between $F$
and $SL$. Note that, the schedule length produced by DSLR
algorithm, for a particular topology is upper bounded by $S$.
But, in practice, the generated schedule length is far less than
$S$. The DSLR algorithm is able to match its performance with
DRAND algorithm  for broadcast scheduling, and for unicast scheduling
its performance is better than DRAND.

%

\noindent \textbf{Runtime:} 
Fig. \ref{fig3} shows the runtime of the DSLR algorithm to achieve
50\% compaction, i.e., schedule length = $S/2$ with respect to $S$, 
for different PER (Packet Error Rate) values.
We can see that the runtime increases rapidly
with the increase in $S$. This is due to the fact that not only
the number of slots, but the size of each slot also increases (due to increase in the size of HELLO messages)
with respect to increase in $S$, and therefore the size of the
frame (in terms of time) is a quadratic function of $S$. 
Similarly,
the runtime increase rapidly as the PER value increases.
Note that we have considered PER value corresponding to
the packet size of 1k bits. The actual PER value for a particular
experiment depends upon the size of HELLO message, which
increases linearly with the increase in $S$. The probability that a
node will not receive at least one HELLO message in a round
is more for higher values of $S$.


Fig. \ref{fig4} shows the runtime performance comparison of DSLR
algorithm with RD-TDMA, DRAND and DTSS algorithms,
with respect to  two-hop network density. Since we
have used RD-TDMA to generate a feasible schedule before
actually starting the slot compaction, the runtime performance
of proposed scheme is shown after adding the runtime of RD-TDMA algorithm.

%

It is to be noted that the runtime performance of
DSLR algorithm is less than that of DTSS algorithm for which
the generated schedule length is very large. On the other hand,
the schedule length achieved after compaction is approximately
equal to DRAND algorithm, which takes very large time $(\approx
60s)$ to perform the scheduling.

\noindent \textbf{Tradeoff between schedule length and runtime:} Now we
discuss a unique feature of the DSLR algorithm, which is its
capability to to trade-off schedule length with runtime performance as 
per the current channel conditions and application
requirements. Fig. \ref{fig5} shows the schedule length achieved by the
DSLR algorithm with respect to the execution of the algorithm
for increasing number of rounds. We can see that considerable
amount of reduction in schedule length can be achieved in
40 rounds. In case of low duty cycle application, where the primary concern is to save the sensor nodes
energy by enabling them to sleep as long as possible, instead
of network bandwidth, we can select the algorithm to run for a
smaller number of rounds. On the other hand if the bandwidth
demand of the application is high and the same schedule is to  be
used over a long of period time, then we can run the algorithm
for a larger number of rounds to get a more compact schedule.

\section{Conclusions}
The proposed two-phase scheme for TDMA scheduling provides a better schedule length and runtime performance 
than the existing TDMA-scheduling algorithms which are either static or dynamic in nature. 
The proposed RD-TDMA algorithm takes very less time (less than 4s upto two-hop network density, 30) to perform the 
scheduling as compared to other existing distributed scheduling algorithms. 
With the use of dynamic slot-probability updation, we are able to further reduce the runtime by 
50\%. The proposed scheme for distributed schedule length reduction  improves the
bandwidth utilization in three ways. First, it reduces the length
of an input schedule, if it is already not compact. Second,
it converts the broadcast schedule  to  unicast/multicast schedule  assuming that the sender 
and receiver relationship between the sensor nodes, is already 
known. Hence, even an optimal schedule based on broadcast
scheduling can be further compressed by the DSLR algorithm. 
Third, the frame size need not be the same network wide. The 
nodes can select the frame size based on the maximum slot 
occupied by any other node within its two-hop neighborhood. 
The DSLR algorithm for schedule length reduction has 
the capability to trade-off the generated schedule length with 
the time required to generate the schedule. This feature is 
useful for the developers of WSNs to tune the performance as 
per the requirements of WSN applications.





\end{document}